\newcommand{\pushright}[1]{\ifmeasuring@#1\else\omit\hfill$\displaystyle#1$\fi\ignorespaces}
\newcommand{\pushleft}[1]{\ifmeasuring@#1\else\omit$\displaystyle#1$\hfill\fi\ignorespaces}
\newtheorem{theorem}{Theorem}
\newcommand{\ket}[1]{\left| {#1} \right\rangle}
\newcommand{\bra}[1]{\left\langle {#1}\right|}
\newcommand{\braket}[2]{\langle #1|#2\rangle}
\newcommand{\abs}[1]{\left| {#1} \right|}
\renewcommand{\t}[1]{\textrm{#1}}
\newcommand{\trace}[0]{\mathrm{Tr}}
\newcommand{\real}{\mathrm{Re}}
\newcommand{\imag}{\mathrm{Im}}
\newcommand{\vspan}{\mathrm{span}}
\newcommand{\bomega}{{\boldsymbol{\omega}}}
\newcommand{\bsigma}{{\boldsymbol{\sigma}}}
\newcommand{\fisher}{F}
\newcommand{\cov}{\Sigma}
\newcommand{\mH}{\mathcal{H}}
\newcommand{\mS}{\mathcal{S}}
\newcommand{\mC}{\mathcal{C}}
\newcommand{\mHC}{\mathcal{H}_{\mathcal{C}}}
\newcommand{\mHR}{{\mathcal{H}_{\mathcal R}}}
\newcommand{\mHCi}{\mathcal{H}_{\mathcal{C}_i}}
\newcommand{\id}{\openone}
\newcommand{\psiw}{\psi_\bomega}
\newcommand{\psiwn}{\psi_{\bomega=\boldsymbol{0}}}
\newcommand{\weight}{W}
\newcommand{\cost}{{\Delta_\weight^2\tilde\bomega}}
\newcommand{\up}{\uparrow}
\newcommand{\down}{\downarrow}
\newcommand{\uup}{\ket{\Phi_+}}
\newcommand{\uum}{\ket{\Phi_-}}
\newcommand{\udp}{\ket{\Psi_+}}
\newcommand{\udm}{\ket{\Psi_-}}
\newcommand{\bR}{{\mathbb R}}
\newcommand{\A}{\Gamma}
\newcommand{\D}{D}
\newcommand{\SEP}{{\rm SEP}}
\newcommand{\JNT}{{\rm JNT}}
\newcommand{\gv}[1]{\ensuremath{\text{\boldmath$ #1 $}}}
\newcommand{\eff}{\mathcal R} %
\newcommand{\ceff}{\mathcal C} %
\newcommand{\thmref}[1]{\hyperref[#1]{Theorem~\ref{#1}}}
\newcommand{\lemmaref}[1]{\hyperref[#1]{Lemma~\ref{#1}}}
\newcommand{\figref}[1]{\hyperref[#1]{Fig.~\ref{#1}}}
\newcommand{\figaref}[1]{\hyperref[#1]{Fig.~\ref{#1}a}}
\newcommand{\figbref}[1]{\hyperref[#1]{Fig.~\ref{#1}b}}
\newcommand{\figcref}[1]{\hyperref[#1]{Fig.~\ref{#1}c}}
\renewcommand{\eqref}[1]{\hyperref[#1]{Eq.~(\ref{#1})}}
\newcommand{\eqsref}[2]{\hyperref[#1]{Eqs.~(\ref{#1})-(\ref{#2})}}
\newcommand{\appref}[1]{\hyperref[#1]{Appx.~\ref{#1}}}
\newcommand{\secref}[1]{\hyperref[#1]{Sec.~\ref{#1}}}
\title{Optimal probes and error-correction schemes in multi-parameter quantum metrology}
\author{Wojciech G{\'{o}}recki*}
\affiliation{Faculty of Physics, University of Warsaw, Pasteura 5, 02-093 Warsaw, Poland}
\author{Sisi Zhou*}
\affiliation{Departments of Applied Physics and Physics, Yale University, New Haven, Connecticut 06511, USA}
\affiliation{Yale Quantum Institute, Yale University, New Haven, Connecticut 06511, USA}
\affiliation{Pritzker School of Molecular Engineering, University of Chicago, Chicago, IL 60637, USA}
\author{Liang Jiang}
\affiliation{Departments of Applied Physics and Physics, Yale University, New Haven, Connecticut 06511, USA}
\affiliation{Yale Quantum Institute, Yale University, New Haven, Connecticut 06511, USA}
\affiliation{Pritzker School of Molecular Engineering, University of Chicago, Chicago, IL 60637, USA}
\author{Rafa{\l} Demkowicz-Dobrza{\'n}ski}
\affiliation{Faculty of Physics, University of Warsaw, Pasteura 5, 02-093 Warsaw, Poland}
\begin{document}

\maketitle

\let\thefootnote\relax\footnotetext{*These two authors provided key and equal contributions to the project.}

\begin{abstract}
We derive a necessary and sufficient condition for the possibility of achieving the Heisenberg scaling in general adaptive multi-parameter estimation schemes in presence of Markovian noise.
In situations where the Heisenberg scaling is achievable, we provide a semidefinite program to identify the optimal quantum error correcting (QEC) protocol that yields the best estimation precision.
We overcome the technical challenges associated with potential incompatibility of the measurement optimally extracting information on different parameters by utilizing the Holevo Cram{\'e}r-Rao (HCR) bound for pure states. We provide examples of significant advantages offered by our joint-QEC protocols, that sense all the parameters utilizing a single error-corrected subspace, over separate-QEC protocols where each parameter is effectively sensed in a separate subspace.
\end{abstract}

\section{Introduction}
\begin{figure}[t]
\center
\includegraphics[width=0.6\columnwidth]{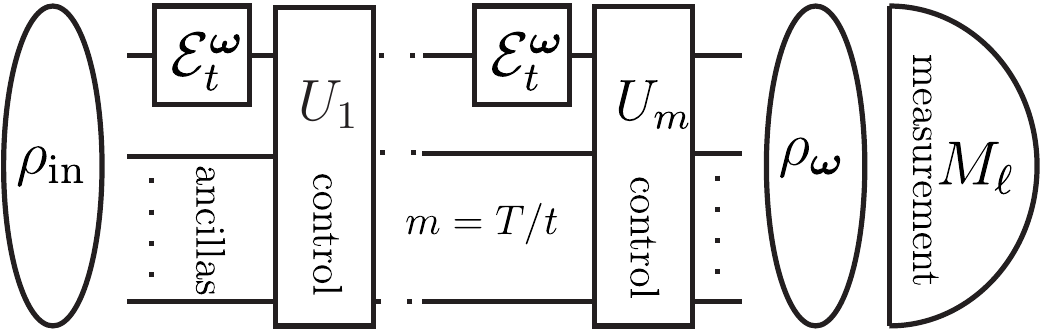}
\caption{General adaptive mutli-parameter quantum metrological scheme, where $P$ parameters
$\boldsymbol{\omega} = [\omega_i]_{i=1}^P$ are to be estimated. Total probe system evolution time $T$ is divided into a number $m$ of $t$-long steps of probe evolution $\mathcal{E}_t^{\boldsymbol{\omega}}$ interleaved with general unitary controls $U_i$. In the end a general POVM $\{M_{\ell}\}$ is performed yielding estimated value of all parameters $\tilde{\boldsymbol{\omega}}(\ell)$ with probability $p(\ell) = \t{Tr}(\rho_{\boldsymbol{\omega}} M_{\ell})$.
}
\label{fig:scheme}
\end{figure}
Quantum metrology aims at exploiting all possible features of quantum systems, such as coherence or entanglement, in order to boost the precision of measurements beyond that achievable by metrological schemes that operate within classical or semi-classical paradigms~\cite{giovannetti2006quantum,Paris2009, giovannetti2011advances,Toth2014, Demkowicz2015, Schnabel2016, degen2017quantum,Pezze2018, Pirandola2018}.
The most persuasive promise of quantum metrology is the possibility of obtaining the so-called Heisenberg scaling (HS), which manifests itself in the quadratically improved scaling of precision as a function of number of elementary probe systems involved in the experiment~\cite{caves1981quantum,holland1993interferometric,lee2002quantum,wineland1992spin,mckenzie2002experimental,bollinger1996optimal,
leibfried2004toward,giovannetti2004quantum,huelga1997improvement,Berry2000} or the total interrogation time of a probe system~\cite{de2005quantum}.
In either of these cases, the presence of decoherence typically restricts the quadratic improvement to a small particle number or a short-time regime, whereas in the asymptotic regime the quantum-enhancement amounts to constant factor improvements \cite{fujiwara2008fibre,demkowicz2009quantum,escher2011general,demkowicz2012elusive,kolodynski2013efficient,knysh2014true} even under the most general adaptive schemes \cite{demkowicz2014using}. Still, there are specific models where even in the presence of decoherence the asymptotic HS is achievable via application of appropriate quantum error correction (QEC) protocols~\cite{kessler2014quantum,dur2014improved,ozeri2013heisenberg,arrad2014increasing,unden2016quantum, reiter2017dissipative,sekatski2017quantum,demkowicz2017adaptive,zhou2018achieving,layden2018spatial,layden2018ancilla,kapourniotis2019fault,tan2019quantum,zhou2020theory,chen2020fluctuation}.

Recently, a general theory providing a necessary and sufficient condition, the HNLS condition (an acronym for ``Hamiltonian-Not-in-Lindblad-Span''), for achieving the HS in a finite-dimensional system in the most general adaptive quantum metrological protocols under Markovian noise, has been developed~\cite{demkowicz2017adaptive,zhou2018achieving}. The theory allows for a quick identification of the most promising quantum metrological models and provides a clear recipe for designing the optimal adaptive schemes based on appropriately tailored QEC protocols. However, HNLS is restricted to the single-parameter estimation case, while a lot of relevant metrological problems, like vector field sensing  (e.g. magnetic field) \cite{baumgratz2016quantum}, imaging \cite{tsang2016quantum}, multiple-arm interferometry \cite{humphreys2013quantum,Gessner2018} or waveform estimation \cite{Tsang2011, Berry2013} are inherently multi-parameter estimation problems. Multi-parameter estimation problems drew a lot of attention in recent years~\cite{matsumoto2002new,genoni2013optimal,ragy2016compatibility,yuan2016sequential,Kura2018, Liu2017, Nichols2018,ge2018distributed}, yet  the fundamental questions regarding the  achievability of the HS as well as the form of the optimal metrological protocols in multiple-parameter estimation in presence of noise have not been answered so far. The aim of this paper is to fill this gap.

The methods developed for the single parameter estimation case, in particular the semidefinite program (SDP)
that allows to identify the  optimal QEC protocol~\cite{zhou2018achieving}, are not applicable in the multi-parameter regime. The reasons are threefold.

First, the widely used quantum Cram{\'e}r-Rao (CR) bound for multiple parameters is not in general saturable, due to the incompatibility of the optimal measurements for different parameters~\cite{ragy2016compatibility, yuan2016sequential, matsumoto2002new}.
Therefore, unlike in the single-parameter case, the quantum Fisher information (QFI) does not provide the full insight into the problem~\cite{ragy2016compatibility,braunstein1994statistical,helstrom1976quantum,Holevo1982,demkowicz2020multi}.
On the other hand, stronger bounds, such as the HCR bound~\cite{Holevo1982,nagaoka2005asymptotic,suzuki2016explicit,Guta2007,yamagata2013quantum,demkowicz2020multi}, have no closed-form expressions (except for specific cases \cite{fujiwara1994multi}). Moreover, the HCR bound, although solvable via an SDP~\cite{albarelli2019evaluating}, does not shed light on the corresponding optimal measurements saturating it. In general, the HCR bound is only saturable when collective measurements on all copies of quantum states are allowed, which is a demanding condition in practice~\cite{Guta2007,yamagata2013quantum}. As a result, the optimal measurements on the output quantum states are hard to identify.

Second, there is no general recipe to find the optimal input states in multi-parameter estimation even in the noiseless case~\cite{yuan2016sequential,Kura2018}, unlike 
in the single-parameter estimation case where the optimal input state is simply the equally weighted superposition between the eigenstates corresponding to the maximum and minimum eigenvalues of the Hamiltonian.

Finally, in the single-parameter case~\cite{zhou2018achieving}, all valid two-dimensional QEC codes were mapped into a traceless Hermitian matrix representing the difference between logical zero and one codes. In the multi-parameter case, however, it is not clear whether valid QEC codes could be mapped into a convex set when the code dimension is large, which is inevitable in multi-parameter estimation.

In this paper, we generalize the HNLS condition to multi-parameter scenarios, and provide an SDP to find the best possible QEC protocol
in the situations when the HNLS condition is satisfied (including all noiseless cases).
The solution yields an explicit form of the optimal input state, QEC codes and measurements. No collective measurements are required on the output states. Our protocol goes beyond the typically used QFI-based formalism and overcomes all the challenges related with the multi-parameter aspect of the problem mentioned above. Our work reveals the advantage of QEC protocols in multi-parameter estimation and we expect that the SDP formulation of our problem will also be an inspiration for other research areas in quantum error correction and quantum metrology.

\section{Formulation of the model}

We assume the dynamics of a $d$-dimensional probe system $\mH_S$ is given by a general quantum master equation~\cite{lindblad1976generators,gorini1976completely,breuer2002theory}:
\begin{equation}
\label{eq:evol}
\frac{d\rho}{dt} = -i[H,\rho]+\sum_{k=1}^r(L_k \rho L_k^\dagger - \frac{1}{2}\{L_k^\dagger L_k,\rho\}),
\end{equation}
where the parameters to be estimated $\boldsymbol{\omega}=[\omega_1,\dots,\omega_P]$ enter linearly into the Hamiltonian of the evolution via Hermitian generators $\boldsymbol{G} = [G_1,\dots,G_P]^T$ (where $^T$ denotes transpose) so that $H = \boldsymbol{\omega}\cdot\boldsymbol{G}\equiv\sum_{k=1}^P \omega_k G_k$, and $L_k$ are operators representing a general Markovian noise.
Similar to the previous investigations~\cite{demkowicz2017adaptive,zhou2018achieving} we consider the most general adaptive scheme (see \figref{fig:scheme})~\cite{demkowicz2014using} with an unlimited number of ancillae (denoted jointly as $\mH_A$), instantaneous perfect intermediate unitary operations $U_i$ and a general POVM on the final state $\rho_{\boldsymbol{\omega}}$. $\mathcal{E}_t^{\boldsymbol{\omega}}$ represents the probe system dynamics integrated over time $t$, whereas the total probe interrogation time is $T$. Such schemes are the most general schemes of probing quantum dynamics, assuming the total interrogation time is $T$, and encompass in particular all QEC procedures.

In single-parameter estimation the optimal protocol is the one that yields the minimum estimation variance. In multi-parameter case the estimator covariance matrix is the key object capturing estimation precision, defined as~\cite{helstrom1976quantum,Holevo1982}:
\begin{equation}
\label{eq:cov}
\cov_{ij} = \textstyle{\sum}_{\ell}\,  \trace(\rho_{\boldsymbol{\omega}} M_\ell) (\tilde{\omega}_i(\ell) - \omega_i)(\tilde{\omega}_j(\ell) - \omega_j)
\end{equation}
for $i,j=1,\ldots,P$,
where $M_\ell \geq 0$, $\sum_\ell M_{\ell} = \openone$, are measurement operators (``$\geq 0$'' for matrices means positive semidefinite) and
 $\tilde{\gv{\omega}}(\ell)$ is an estimator function mapping a measurement result $\ell$ to the parameter space. 

Diagonal entries of $\cov$ represent variances of estimators of respective parameters while off-diagonal terms represent covariance between different parameters. As a figure of merit one may simply choose $\trace(\cov)$ which will be the sum of all individual parameter variance, or  more generally $\trace(\weight \cov )$, where  $\weight$ is a real positive cost matrix that determines the weight we associate with each parameter in the effective scalar cost function
\begin{equation}
\cost\equiv\trace (\weight \cov ).
\end{equation}
Note that we require strict positivity of $\weight$ which is equivalent to saying that it is an estimation problem of all $P$ parameters, and not a problem where effectively only a smaller number of parameters are relevant. We assume the measurement-estimation strategy to be locally unbiased at some fixed parameter point $\bomega$, i.e.
 \begin{equation}
 \label{eq:unbiased}
 \sum_\ell \tilde{\omega}_j (\ell)\trace(\rho_{\gv \omega}M_\ell) = \omega_j, \quad  \sum_\ell \tilde{\omega}_j(\ell)\trace\big(
\partial_i \rho_{\gv \omega} M_\ell\big) = \delta_{ij},
\end{equation}
where $\partial_i \rho_{\gv \omega} = \frac{\partial \rho_{\gv \omega}}{\partial \omega_i}$,
which is a standard assumption necessary to obtain meaningful precision bounds within the frequentist estimation framework~\cite{Kay1993,gill2000state}.

\section{The necessary and sufficient condition for the HS}
\label{sec:HNLS}

We say that the HS in a multi-parameter estimation problem is achieved when there exists an adaptive protocol such that for every $\weight>0$, $\cost\propto 1/T^2$ in the limit $T \rightarrow \infty$. This is equivalent to a requirement that all parameters (and any combination of parameters) are estimated with precision that scales like the HS. The following theorem generalize the HNLS condition \cite{demkowicz2017adaptive,zhou2018achieving} to multi-parameter scenarios. 


\begin{theorem}[Multi-parameter HNLS]
The HS can be achieved in a multi-parameter estimation problem if and only if $\{(G_i)_\perp, i=1,\dots, P\}$ are linearly independent operators. Here $(G_i)_\perp$ are orthogonal projections of $G_i$ onto space $\mS^\perp$ which is the orthogonal complement of the Lindblad span
\begin{equation}
 \mS = \mathrm{span}_{\mathbb{R}}\{\id,L_k^{{\mathrm H}}, i L_k^{{\rm AH}},(L_k^\dagger L_j)^{{\mathrm H}}, i(L_k^\dagger L_j)^{{\rm AH}},\,\forall j,k\},
\end{equation}
in the Hilbert space of Hermitian matrices under the standard Hilbert-Schmidt scalar product,  whereas 
the superscripts $^{{\mathrm H}}$, $^{{\rm AH}}$ denote the Hermitian and anti-Hermitian part of an operator respectively.
\end{theorem}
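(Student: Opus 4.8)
The plan is to reduce the multi-parameter statement to the known single-parameter HNLS condition~\cite{demkowicz2017adaptive,zhou2018achieving} applied to suitable linear combinations of the generators, exploiting the equivalence noted in the text between achieving the HS for every $\weight>0$ and estimating every linear combination $\boldsymbol{c}\cdot\boldsymbol{\omega}$ with Heisenberg-scaling precision. Concretely: for a fixed real vector $\boldsymbol{c}=[c_1,\dots,c_P]^T$ the quantity $\boldsymbol{c}\cdot\boldsymbol{\omega}$ is governed by the single effective generator $G_{\boldsymbol{c}}=\sum_i c_i G_i$ together with the same Lindblad operators $L_k$, hence the same span $\mS$. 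By single-parameter HNLS, $\boldsymbol{c}\cdot\boldsymbol{\omega}$ admits HS if and only if $G_{\boldsymbol{c}}\notin\mS$, equivalently $(G_{\boldsymbol{c}})_\perp\neq 0$, where $(\cdot)_\perp$ is projection onto $\mS^\perp$. Since projection is linear, $(G_{\boldsymbol{c}})_\perp=\sum_i c_i (G_i)_\perp$, so this holds for all $\boldsymbol{c}\neq 0$ precisely when $\{(G_i)_\perp\}_{i=1}^P$ are linearly independent. That gives both directions at the level of ``each scalar combination achieves HS'', and the remaining work is to argue that this per-combination statement is genuinely equivalent to the joint statement in the definition.

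For the ``if'' direction I would make this constructive. Assuming $\{(G_i)_\perp\}$ are linearly independent, pick $P$ combinations $\boldsymbol{c}^{(1)},\dots,\boldsymbol{c}^{(P)}$ forming a basis and, for each, invoke the single-parameter QEC construction of~\cite{zhou2018achieving}: there is a two-dimensional code and an adaptive protocol estimating $\boldsymbol{c}^{(a)}\cdot\boldsymbol{\omega}$ with variance $O(1/T^2)$. Running these $P$ protocols on disjoint time shares (a ``separate-QEC'' scheme, using fresh ancillae for each) yields unbiased estimators of all $P$ combinations, each with $O(1/T^2)$ variance; inverting the (fixed, invertible) matrix $[\boldsymbol{c}^{(a)}_i]$ produces unbiased estimators of the $\omega_i$ themselves, and the covariance matrix $\cov$ transforms by congruence with a fixed matrix, so $\cov=O(1/T^2)$ entrywise and hence $\trace(\weight\cov)=O(1/T^2)$ for every $\weight>0$. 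This establishes the definition of ``HS achieved.''

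For the ``only if'' direction, suppose the $(G_i)_\perp$ are linearly dependent, so there is $\boldsymbol{c}\neq 0$ with $\sum_i c_i (G_i)_\perp=0$, i.e. $G_{\boldsymbol{c}}\in\mS$. Then the single-parameter converse of HNLS says $\boldsymbol{c}\cdot\boldsymbol{\omega}$ cannot be estimated with error scaling better than $1/T$ under any adaptive protocol of the type in \figref{fig:scheme}. Now for any protocol estimating all parameters with covariance $\cov$, the variance of the induced estimator of $\boldsymbol{c}\cdot\boldsymbol{\omega}$ is $\boldsymbol{c}^T\cov\,\boldsymbol{c}$, which is therefore $\Omega(1/T)$ (or at least not $o(1/T)$). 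Choosing $\weight=\boldsymbol{c}\boldsymbol{c}^T$ would make $\trace(\weight\cov)=\boldsymbol{c}^T\cov\,\boldsymbol{c}$ fail the $\propto 1/T^2$ requirement; but $\weight=\boldsymbol{c}\boldsymbol{c}^T$ is only positive semidefinite, so I would instead take $\weight_\epsilon=\boldsymbol{c}\boldsymbol{c}^T+\epsilon\,\id>0$ and note $\trace(\weight_\epsilon\cov)\geq \boldsymbol{c}^T\cov\,\boldsymbol{c}$, which already cannot scale as $1/T^2$. Hence the definition of ``HS achieved'' fails.

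The main obstacle I anticipate is the ``only if'' step, specifically importing the single-parameter converse in a way that is uniform over all adaptive multi-parameter protocols: one must be careful that an adaptive scheme designed to estimate all $P$ parameters jointly does not somehow estimate the bad combination $\boldsymbol{c}\cdot\boldsymbol{\omega}$ better than a scheme dedicated to it. This should follow because any such joint protocol, composed with classical post-processing $\ell\mapsto \boldsymbol{c}\cdot\tilde{\boldsymbol{\omega}}(\ell)$, is itself a valid (locally unbiased) single-parameter protocol for $\boldsymbol{c}\cdot\boldsymbol{\omega}$ of the form covered by the single-parameter theorem, so the single-parameter lower bound applies verbatim; making this reduction airtight, including the local-unbiasedness bookkeeping in~\eqref{eq:unbiased}, is the one point needing genuine care.
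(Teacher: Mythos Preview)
Your necessity argument is essentially the paper's: where the paper applies an invertible reparametrization $A$ so that some $(G'_i)_\perp=0$ and then invokes single-parameter HNLS for that coordinate, you pick the offending direction $\boldsymbol{c}$ directly and feed it back through the cost via $\weight_\epsilon=\boldsymbol{c}\boldsymbol{c}^T+\epsilon\id$. These are the same move in different clothing, and your remark that the joint protocol post-composed with $\ell\mapsto\boldsymbol{c}\cdot\tilde{\boldsymbol{\omega}}(\ell)$ is itself an admissible single-parameter protocol is exactly the right justification.

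The sufficiency argument, however, has a real gap. Invoking the single-parameter construction of~\cite{zhou2018achieving} for the generator $G_{\boldsymbol{c}^{(a)}}$ gives you a two-dimensional code $\mH_{\mC_a}$ on which the Lindblad noise is correctable and $G_{\boldsymbol{c}^{(a)}}$ acts nontrivially. But the \emph{full} Hamiltonian $H=\sum_i\omega_iG_i$ is always on, and the construction says nothing about how the remaining components of $H$ act on $\mH_{\mC_a}$. In general $\Pi_{\mC_a}(G_j\otimes\id)\Pi_{\mC_a}\not\propto\Pi_{\mC_a}$ for $j$ outside the chosen direction, so the effective evolution in that code depends on \emph{all} the $\omega_j$, and the resulting estimator is not a locally unbiased estimator of the single scalar $\boldsymbol{c}^{(a)}\cdot\boldsymbol{\omega}$. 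Time-sharing does not help here, because each time slice still evolves under the full $H$; nor can adaptive unitaries cancel the unwanted generators without knowing the very parameters you are trying to estimate.

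The paper's sufficiency proof supplies precisely the missing idea: when building the code for the $i$-th parameter it enlarges the noise span to
\[
\mS_i=\mS\oplus\mathrm{span}_{\mathbb R}\{G_j\}_{j\neq i},
\]
so that the QEC condition forces $\Pi_{\mC_i}(G_j\otimes\id)\Pi_{\mC_i}\propto\Pi_{\mC_i}$ for all $j\neq i$ and only $\omega_i$ is sensed in $\mH_{\mC_i}$. The hypothesis that $\{(G_i)_\perp\}$ are linearly independent is exactly what guarantees $G_i\notin\mS_i$, so single-parameter HNLS still applies relative to the enlarged span. Once you add this step your outline goes through; without it, the ``clean estimator of $\boldsymbol{c}^{(a)}\cdot\boldsymbol{\omega}$'' claim is unjustified.
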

\begin{proof}
Let us start with a brief review of the HNLS condition in the single-parameter case, where $H = \omega G$ involves only a single generator $G$.
As shown in~\cite{demkowicz2017adaptive,zhou2018achieving}, the necessary and sufficient condition to achieve the HS is that $G \notin \mS$, or in other words that $G_\perp \neq 0$.
In particular, following \cite{zhou2018achieving} (see the section named ``QEC code for HL scaling when HNLS holds''), an explicit construction of the optimal QEC code was provided, where the code space $\mH_\mC\subseteq\mH_S\otimes\mH_A$ is defined on the Hilbert space of the probe system $\mH_S$ extended by an ancillary space $\mH_A \cong \mH_S$.
The code space satisfies the QEC condition~\cite{zhou2018achieving,knill1997theory}:
\begin{equation}
\Pi_{\mHC} (S\otimes\openone) \Pi_{\mHC} \propto \Pi_{\mHC}, \forall S\in\mS,
\label{eq:QECs}
\end{equation}
where the operator $S$ acting on $\mH_S$ was tensored with identity on $\mH_A$ and $\Pi_{\mHC}$ denotes the projection onto $\mHC$.
Metrological sensitivity is guaranteed by the fact that $G$ acts non-trivially on $\mHC$:
\begin{equation}
G^{\ceff} = \Pi_{\mHC} (G\otimes\openone)\Pi_{\mHC} \not\propto \Pi_{\mHC}.
\end{equation}
As a result we obtain a noiseless unitary evolution generated by $G^{\ceff}$ leading to
the HS in the estimation precision of $\omega$.

(Necessity) Suppose $(G_i)_\perp$'s are linearly dependent. Then there exists a linear (invertible) transformation on the parameter space $A \in \mathbb{R}^{P\times P}$: $\bomega' = \bomega A^{-1}$, (where we also modify accordingly the generators $\boldsymbol{G}' = A \boldsymbol{G}$ and the cost matrix $\weight'=A\weight A^{T}$, so that $H$ and $\cost$ remain unchanged), such that $(G'_i)_\perp = 0$ for some $i$. Then, from the single-parameter theorem, $\omega'_i$ cannot be estimated with precision better than $\Delta^2\tilde{\omega}'_i\sim 1/T$ which contradicts the HS requirements.

(Sufficiency) Suppose $(G_i)_\perp$'s are linearly independent.
We assume the ancillary space to be a direct sum of $P$ subspaces $\mH_{A_i}$ so that the whole Hilbert space is $\mH_S \otimes (\mathcal{H}_{A_1} \oplus \cdots \oplus \mathcal{H}_{A_P})$ (see \figaref{fig:diagram}). We may construct separate code spaces for each parameter using orthogonal ancillary subspace $\mHCi\subseteq\mH_S \otimes\mathcal{H}_{A_i}$ so that the QEC conditions \eqref{eq:QECs} are satisfied within each code space $\mHCi$ separately.  While constructing the code space for the $i$-th parameter, we include all the  remaining generators $G_j$ ($j \neq i$) in the Lindblad span, so effectively treating them as noise
i.e.  $\mS_i = \mathrm{span}_{\mathbb{R}}\{\{\id,L_k^{{\mathrm H}}, i L_k^{{\rm AH}},(L_k^\dagger L_j)^{{\mathrm H}}, i(L_k^\dagger L_j)^{{\rm AH}}\}_{j,k}\cup \{G_j\}_{j\neq i}\}$. As a result thanks to the QEC condition it follows that $\forall_{i\neq j}\Pi_{\mHCi}\left(G_j\otimes\openone\right)\Pi_{\mHCi} \propto  \Pi_{\mHCi}$ and hence within a given subspace only one parameter is being sensed via the effective generator $G_i^{\ceff_i} = \Pi_{\mHCi} (G_i\otimes\openone) \Pi_{\mHCi}$, while all other generators act trivially.
If $\ket{\psi_i}\in\mHCi$ is the optimal state for measuring $\omega_i$, the state to be used in order to obtain HS for all parameters which is not affected by noise reads $\rho_\text{in} = \frac{1}{P}\sum_{i=1}^{P}\ket{\psi_i}\bra{\psi_i} \in \mH_S \otimes \left(\bigoplus_{i=1}^P \mathcal{H}_{A_i}\right)$---there is no measurement incompatibility issue because different parameters are encoded on orthogonal subspaces.
\end{proof}

Similar to the single-parameter case, it must be admitted that in realistic situations with generic noise, HNLS is often  violated~\cite{demkowicz2012elusive, demkowicz2017adaptive}.
Therefore, a more pragmatic approach is required taking into account the fact that in a real experiment the total time of evolution $T$ is always finite. Let us consider a situation where $H \in \mathcal{S}$, but where some noise components are weak~\cite{zhou2018achieving}. Specifically, we divide Lindblad operators in \eqref{eq:evol} into two sets---strong noise generators $\{L_k\}$ and weak noise generators $\{J_m\}$ satisfying $\epsilon:=\|\sum_m J^\dagger_m J_m \|$ where $\|\cdot\|$ denotes operator norm. If the HNLS condition is satisfied for the strong noise part, we could choose the code space $\mHC$ which allows to completely erase the strong noise $\{L_k\}$ and the resulting effective noise rate would be upper bounded by $\epsilon$~\cite{zhou2018achieving}. This means that the distance between state of the error-corrected probe and the state evolving under ideal noiseless evolution will be of the order $\Theta(\epsilon T)$. Therefore, for sufficiently short evolution, $T=o(1/\epsilon)$, the precision of estimation will still scale quadratically with the total time $\cost\propto \frac{1}{T^2}$ whereas for larger $T$, it will gradually approach the standard $1/T$ scaling.

\begin{figure}[t]
\center
\includegraphics[width=0.9\columnwidth]{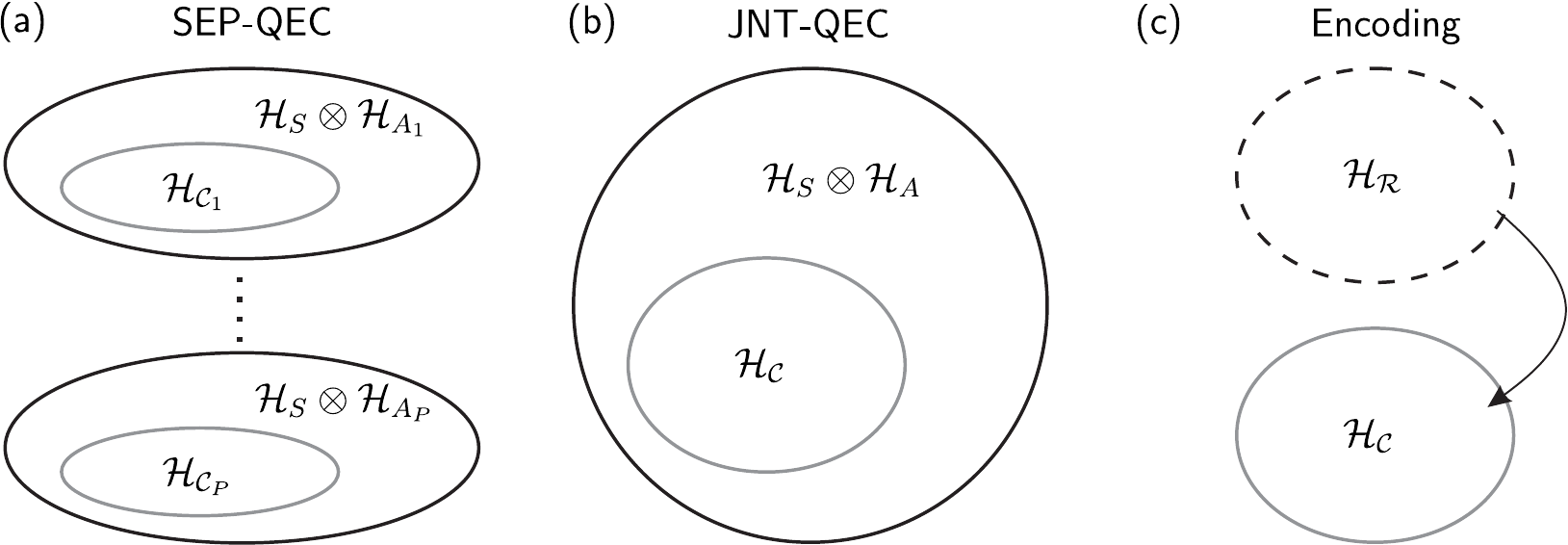}
\caption{Schematic diagrams of relations between Hilbert spaces $\mH_S,\mH_A,\mH_\mC,\mH_{A_i},\mH_{\mC_i},\mH_\eff$. (a) In SEP-QEC, we use $P$ mutually orthogonal ancillary subspaces $\mH_{A_i}$ to sense each parameter $\omega_i$. $\mH_A = \bigoplus_{i=1}^P \mH_{A_i}$ and $\mH_\mC = \bigoplus_{i=1}^P \mH_{\mC_i}$. $\dim(\mH_{A_i}) = \dim(\mH_S) = d$ and $\dim(\mH_{\mC_i}) = 2$. (b) In JNT-QEC, we use a single code space $\mH_\mC \subseteq \mH_S \otimes \mH_A$ to estimate all parameters jointly. $\dim(\mH_{A}) = (P+1)d$ and $\dim(\mH_\mC) = P + 1$. (c) We use $\mH_\eff$ to represent the logical space ${\rm span}\{\ket{0},\ket{1},\ldots,\ket{P}\}$ which is encoded into the physical space $\mH_\mC = {\rm span}\{\ket{c_0},\ket{c_1},\ldots,\ket{c_P}\}$.
}
\label{fig:diagram}
\end{figure}

\section{Optimal probes, error correction schemes and measurements}
\label{sec:optprobes}

In \secref{sec:HNLS}, we provided a QEC code where each parameter is sensed separately in different error-corrected subspaces (see \figaref{fig:diagram}). Such protocols will be referred as separate-QEC protocols (SEP-QEC). In contrast to this construction, we will now consider QEC strategies which allow for simultaneous estimation of all the parameters in a single coherent protocol by utilizing states within a single protected code space, which we will call the joint-parameter QEC scheme (JNT-QEC).
In this section we provide a general method to identify the optimal JNT-QEC, while its potential advantages
over the optimal SEP-QEC will be discussed in \secref{sec:JNTSEP}, as well as in \secref{sec:examples}
when studying concrete estimation models.

 From now on, we assume the multi-parameter HNLS condition is satisfied. Without loss of generality, we assume the generators $\{G_i\}_{i=1}^P \subset \mathcal S^\perp$ are orthonormal, since the components in $\mathcal S$ do not contribute and there is always a linear transformation $A$ on parameters leading to orthonormality.
The following theorem provides a recipe to find the optimal JNT-QEC.

\begin{theorem}[Optimal JNT-QEC]
Given a cost matrix $W$. 
If the multi-parameter HNLS condition is satisfied with generators $\{G_i\}^{P}_{i=1} \subset S^{\perp}$,
the minimum cost $\cost$ that can be achieved in a JNT-QEC reads
\begin{equation}
\begin{split}
\label{eq:thm2}
&\cost=
\frac{P}{4T^2} \min_{G^\eff_i,B_i,\nu_i,K,w} w,\\
\text{~subject to~~~~~}
&\id_{\scriptscriptstyle P+1}\otimes \frac{\id_d}{d}  +  \sum_{i=1}^{P}  (G^\eff_i)^T  \otimes  G_{i}  +  \sum_{i=P+1}^{P'}  \nu_i \id_{\scriptscriptstyle P+1}  \otimes  S_i  +  \sum_{i=P^\prime +1}^{d^2-1}  B_i  \otimes  R_i  \geq  0,\\
&
\A_{ij}=\imag[G^\eff_j]_{i0},\quad
\begin{pmatrix}
    w\openone_{\scriptscriptstyle P}       &  K\\
    K       &  \openone_{\scriptscriptstyle P}
\end{pmatrix}\geq 0,\quad
\begin{pmatrix}
    K       &  \openone_{\scriptscriptstyle P}\\
    \openone_{\scriptscriptstyle P}       &  \A\sqrt{\weight^{-1}}
\end{pmatrix}\geq 0,
\end{split}
\end{equation}
where $\id_d/\sqrt{d}$, $\{G_{i}\}_{i=1}^P$, $\{S_{i}\}_{i=P+1}^{P'}$, $\{R_{i}\}_{i=P'+1}^{d^2-1}$ form an orthonormal basis of Hermitian operators  acting on $\mH_S$ such that $\mS = \mathrm{span}_{\mathbb{R}}\{\id_d, {(S_{i})}_{i=P+1}^{P'}\}$.
Moreover, $G^\eff_i$, $B_i$ are Hermitian $(P+1)\times (P+1)$ matrices (with matrix indices taking values from $0$ to $P$),
and $\nu_i\in \mathbb R$. $\A$ and $K$ are real $P\times P$ matrices (with matrix indices from $1$ to $P$).
\end{theorem}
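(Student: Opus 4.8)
The plan is to turn the optimization over JNT-QEC protocols into the displayed SDP via three reductions: (i) parametrize the admissible codes by a \emph{convex} set --- encoded in a positive operator $M$ --- reading off the effective generators along the way; (ii) recognize the resulting task as a pure-state estimation problem on the $(P{+}1)$-dimensional logical space $\mH_\eff$; (iii) evaluate the best attainable cost of that problem through the pure-state Holevo Cram\'er-Rao (HCR) bound and fuse that with the code optimization.

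For reduction (i) I would first invoke, exactly as in the proof of Theorem~1, that a code obeying the QEC condition \eqref{eq:QECs} corrects all the noise, so the logical dynamics is the noiseless unitary generated by $H^\eff=\sum_k\omega_k G^\eff_k$ with $G^\eff_k=\Pi_{\mH_\mC}(G_k\otimes\id)\Pi_{\mH_\mC}$; and that $\dim\mH_\mC=P{+}1$ is both necessary (a pure state with $P$ independent tangent directions spans $P{+}1$ dimensions) and sufficient (a valid code of any dimension can be compressed to $P{+}1$ dimensions --- spanned by the output state and its $P$ derivatives at the estimation point --- without changing the precision). Fixing a basis $\{\ket{c_0},\dots,\ket{c_P}\}$ of $\mH_\mC$ and, after a basis rotation, taking $\ket{c_0}$ as the input, I would carry all remaining freedom in the Hermitian operator $M=\sum_{\mu\nu}\ketbra{\mu}{\nu}\otimes\sigma_{\mu\nu}$ on $\mH_\eff\otimes\mH_S$, where $\sigma_{\mu\nu}=\trace_A\ketbra{c_\mu}{c_\nu}$. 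The facts to check are elementary: $M\geq0$ and $\trace_S M=\id_{P+1}$ (the latter being orthonormality of the $\ket{c_\mu}$), and conversely every such $M$ is realized by a code with $\dim\mH_A=(P{+}1)d$ (a purification argument, matching the dimension count of \figbref{fig:diagram}); moreover the QEC condition for $S\in\mS$ reads $\trace(S\,\sigma_{\mu\nu})\propto\delta_{\mu\nu}$ --- i.e. the $\mS$-components of $M$ are multiples of $\id_{P+1}$ --- while $\trace(G_i\,\sigma_{\mu\nu})=[G^\eff_i]_{\nu\mu}$, so the $G_i$-component of $M$ is $(G^\eff_i)^T$. Expanding $M$ in the orthonormal Hermitian basis $\{\id_d/\sqrt d,G_i,S_i,R_i\}$ then reproduces precisely the first linear matrix inequality in \eqref{eq:thm2}, with free variables $G^\eff_i$ Hermitian, $\nu_i\in\bR$, $B_i$ Hermitian. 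This step is what disposes of the convexity concern raised in the introduction: the admissible $(G^\eff_i,\nu_i,B_i)$ form a spectrahedron.

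Reduction (ii) is then bookkeeping: the logical output is the pure state $\ket{\psi_\bomega}=e^{-iT\sum_k\omega_k G^\eff_k}\ket{c_0}$ (in the limit of frequent QEC recoveries, corrections being negligible in the regime $\cost\propto 1/T^2$), and its locally unbiased estimation depends only on the projected tangents $\ket{\tilde\partial_i\psi}=-iT\sum_{\mu=1}^P[G^\eff_i]_{\mu0}\ket{\mu}$, equivalently on the complex matrix $V$ with $V_{\mu i}=[G^\eff_i]_{\mu0}$; writing $V=C+i\A$ with $\A_{ij}=\imag[G^\eff_j]_{i0}$ as in \eqref{eq:thm2}, the QFI and incompatibility (mean Uhlmann) matrices come out as $\fisher=4T^2(C^TC+\A^T\A)$ and $\D=4T^2(C^T\A-\A^TC)$. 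For reduction (iii) I would use that the smallest $\cost=\trace(\weight\cov)$ over all measurements and locally unbiased estimators is the HCR bound $C_H(\weight)$ of this pure-state model, that for a pure model $C_H(\weight)$ has a closed-form evaluation, and that it is attained by an explicit measurement on the output state --- hence no collective measurements are needed. Substituting that formula and carrying out its internal minimization should produce the two remaining blocks of \eqref{eq:thm2}: $K$ is the auxiliary matrix linearizing the inverse in the bound, its constraint $\begin{pmatrix}K & \id_P\\ \id_P & \A\sqrt{\weight^{-1}}\end{pmatrix}\geq0$ encoding $K\succeq\sqrt{\weight}\,\A^{-1}$, while $w$, through $\begin{pmatrix}w\id_P & K\\ K & \id_P\end{pmatrix}\geq0$ (i.e. $w\id_P\succeq KK^T$), linearizes the term generated by the incompatibility, with the prefactor $\frac{P}{4T^2}$ absorbing the $4T^2$ normalization of the QFI together with the code dimension; jointly minimizing over $(G^\eff_i,\nu_i,B_i)$ and $(K,w)$ then yields \eqref{eq:thm2}.

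I expect reduction (iii) to be the main obstacle: deriving and justifying the closed-form pure-state HCR bound for this model, proving it is saturated by a non-collective measurement, and --- the subtle point --- showing that after minimizing over the whole convex family of codes the combined problem is \emph{exactly} the stated SDP, i.e. that the $(K,w)$ reformulation is tight rather than merely an upper bound on $\min_{\text{codes}}C_H(\weight)$. The convex parametrization from (i) is precisely what makes this final fusion possible, and it is what replaces the two-dimensional-code description available only in the single-parameter case.
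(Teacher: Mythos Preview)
Your reductions (i) and (ii) match the paper almost verbatim: your $M$ is the paper's $Q$, the purification argument and the $(P{+}1)$-dimensional code-space reduction are the same, and the first linear matrix inequality in \eqref{eq:thm2} arises exactly as you describe.

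The gap is in (iii). You frame the pure-state HCR as ``QFI term plus incompatibility term'' and then guess that $K$ linearizes the former while $w$ linearizes the latter. That is not how the SDP arises. The paper uses the \emph{Matsumoto reformulation} of the pure-state HCR (\eqref{eq:matsumoto}), which absorbs the incompatibility into a single real expression: one minimizes $\trace(\weight V)$ over vectors $\ket{x_i}$ with $\imag V=0$, and after an orthonormal change of basis $\{\ket{c_i}\}_{i=1}^P$ this collapses to $\trace\!\big(\weight(D^TD)^{-1}\big)$ with $D_{ij}=2\real\braket{c_i|\partial_j\psi}$ (\eqref{eq:orthomat}). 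The crucial move is then to \emph{identify} these Matsumoto vectors $\ket{c_i}$ with the code basis vectors themselves, so that the internal HCR minimization and the code optimization fuse. At $\bomega=0$ one has $D=2T\A$, and the entire cost becomes $\frac{1}{4T^2}\trace\!\big(\weight(\A^T\A)^{-1}\big)$---the real part $C$ in your $V=C+i\A$ never appears, precisely because the Matsumoto minimization has already eaten it. There is no separate ``incompatibility term'' left to be linearized; both $K$ and $w$ are auxiliary variables introduced solely to turn $\trace\!\big(\weight(\A^T\A)^{-1}\big)$ into an SDP objective.

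Your tightness worry is the right instinct, and the paper's resolution is short but essential: by the polar decomposition, for any full-rank $\A$ there is an orthogonal $O$ with $O\A\sqrt{\weight^{-1}}\geq 0$; applying $O$ amounts to rotating the basis $\{\ket{c_i}\}_{i=1}^P$ of the reference space, which preserves both $Q\geq 0$ and $\trace\!\big(\weight(\A^T\A)^{-1}\big)$. One may therefore impose $\A\sqrt{\weight^{-1}}\geq 0$ without loss, after which the two Schur-complement constraints read $K\geq(\A\sqrt{\weight^{-1}})^{-1}$ and $w\id_P\geq K^2$, and $P\min w=\min\trace(K^2)=\min\trace\!\big(\weight(\A^T\A)^{-1}\big)$ holds with equality. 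Without this step the $(K,w)$ block would indeed only give an upper bound, exactly as you feared.
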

As a semidefinite program (SDP) it could be easily solved numerically, for example, using the Matlab-based package CVX~\cite{grant2008cvx}. Before giving a formal proof, let us briefly review some existing bounds in multi-parameter metrology and discuss their saturability.

\subsection{General bounds in multi-parameter metrology}
\label{sec:bounds}

In this section, we discuss bounds for general parameter estimation problems on fixed quantum states. We use $\mH$ to denote a general finite-dimensional Hilbert space.

Most commonly, quantum multi-parameter estimation problems are analyzed utilizing the standard quantum CR bound~\cite{helstrom1976quantum,braunstein1994statistical,Holevo1982}:
\begin{equation}
\cost \geq \trace (\weight \fisher^{-1}), \quad \fisher_{ij}=\real(\trace(\rho_\bomega\Lambda_i\Lambda_j)),
\end{equation}
where $\fisher$ is a $P\times P$ QFI matrix and $\Lambda_i$ (symmetric logarithmic derivatives) satisfy $\partial_i\rho_\bomega=\frac{1}{2}(\Lambda_i\rho_\bomega+\rho_\bomega\Lambda_i)$.
This bound is not saturable in general, due to potential non-compatibility of the optimal measurements, unless $\imag [\trace(\rho_\bomega\Lambda_i\Lambda_j)]=0$~\cite{ragy2016compatibility}.
Moreover, a direct minimization of the CR bound over all JNT-QEC,  with the saturability constraint imposed, does not necessarily guarantee the identification of the optimal protocol---there is a possibility that the optimal protocol does not meet the saturability constraint for the CR bound.


Therefore, in order to identify truly optimal protocols we need to resort to a stronger HCR bound~\cite{Holevo1982,nagaoka2005asymptotic,suzuki2016explicit}:
\begin{equation}
\begin{split}
\label{eq:holevo}
&\cost \geq \min_{\{X_i\}}(\trace\left(\weight \cdot {\rm Re} V)+\trace\left[{\rm abs}(\weight \cdot \imag V)\right] \right),\quad\text{where}~ V_{ij}=\trace(X_i X_j \rho_{\bomega}),\\
&\text{for Hermitian}~X_i \in{\cal L}(\mH),\quad \text{subject~to} ~\trace(X_i\partial_j\rho_{\bomega}) = \delta_{ij},
\end{split}
\end{equation}
where ${\cal L}(\circ)$ denotes the set of all linear operators acting on $\circ$, ${\rm Re}$ and ${\rm Im}$ denote the real and imaginary part of a matrix (not to be confused with the Hermitian and anti-Hermitian part of a matrix), and $\trace[{\rm abs}(\cdot)]$ is the sum of absolute values of the eigenvalues of a matrix. When the second term is dropped the HCR bound reduces to the standard CR bound~\cite{Holevo1982,ragy2016compatibility}. Unlike the CR bound this bound is
saturable in general using collective measurements on many copies \cite{Guta2007,yamagata2013quantum}. 
On the other hand, the HCR is defined via an optimization problem, making it usually more difficult to deal with than the standard quantum CR bound with a closed-form expression.

In the case of pure states $\rho_{\bomega} = \ket{\psiw}\bra{\psiw}$, however, we note that the HCR is exactly equivalent to~\cite{matsumoto2002new}:
\begin{equation}
\label{eq:matsumoto}
\begin{split}
&\cost\geq\min_{\{\ket{x_i}\}}\trace(\weight V),\quad {\textrm{where}}~~V_{ij}=\braket{x_i|x_j},\\
&\text{for}~\ket{x_i} \in \vspan\{\ket{\psiw},\partial_1\ket{\psiw},...,\partial_P\ket{\psiw}\}\oplus \mathbb C^P,\\
&\text{subject~to}~2\real (\braket{x_i|\partial_j|\psi_\bomega})=\delta_{ij},\,\braket{x_i|\psiw}=0,\;\imag (V)=0,
\end{split}
\end{equation}
which we will call the Matsumoto bound. It was also shown in~\cite{matsumoto2002new} that the bound is always saturable using individual measurements (which relaxes the requirement of collective measurements in the pure state case). However, there are no known efficient numerical algorithms to find the solutions $\{\ket{x_i}\}$ of the Mastumoto bound and it is not clear yet whether or not the Mastumoto bound will be useful in finding the optimal QEC protocols in our situation.
In the following, we will start the proof of Theorem 2 by first sketching the proof of the Matsumoto bound and then reformulating it in such a way that the final optimization problem becomes an SDP, which will eventually be incorporated into the QEC protocol optimization procedure.

\subsection{Proof of Theorem 2}

\subsubsection{Proof and reformulation of the Matsumoto bound \texorpdfstring{(\eqref{eq:matsumoto})}{(Eq.(10))}}

According to the Naimark's theorem~\cite{Holevo1982}, for any general measurement $\{M_\ell\}$ on $\mH$ there exists a projective measurement $\{E_\ell\}$ on an extended space $\mH_M$ (where $\mH\subseteq\mH_M$) satisfying $E_\ell E_{\ell'} = \delta_{\ell \ell'}E_\ell$ and $M_\ell=\Pi_\mH E_\ell \Pi_\mH$.
 We now define a set of vectors $\ket{x_i}\in \mH_M$:
\begin{equation}
\label{xdef}
\ket{x_i}=\sum_\ell (\tilde\omega_i(\ell)-\omega_i)E_{\ell}\ket{\psiw}.
\end{equation}
One may see that, thanks to the projective nature of measurements  $\{E_\ell\}$, scalar products of vectors $\ket{x_i}$
yield the covariance matrix of the estimator:
 \begin{equation}
   V_{ij}=\braket{x_i|x_j} = \sum_{\ell,\ell'} \bra{\psiw}(\tilde\omega_i(\ell)-\omega_i)E_{\ell} E_{\ell'}(\tilde\omega_j(\ell')-\omega_j)\ket{\psiw} = \Sigma_{ij}.
 \end{equation}

Now, instead of minimizing over the measurement $\{M_\ell\}$ on $\mH$, we can perform the minimization directly over the vectors $\ket{x_i}\in \mH_M$,
imposing the following constraints:
\begin{equation}
\label{matcond}
\imag(\braket{x_i|x_j})=0, \quad \braket{x_i|\psiw}=0,
\quad 2\real(\braket{x_i|\partial_j|\psiw})=\delta_{ij}.
\end{equation}
These constraints correspond respectively to the projective nature of the measurement $\{E_\ell\}$ and the unbiasedness conditions as given in \eqref{eq:unbiased}.
At this point one may wonder how big the space $\mH_M$ should be (as for a general measurement it might be arbitrary large). However, we can always map $\vspan\{\ket{\psiw},\{\partial_i\ket{\psiw},\ket{x_i}\}_{i=1}^{P}\}\subseteq\mH_M$ isometrically to a $(2P+1)$-dimensional space. Therefore when looking for the bound, under the constraint \eqref{matcond}, it is enough to perform the minimization over $\ket{x_i} \in \vspan\{\ket{\psiw},\partial_1\ket{\psiw},...,\partial_P\ket{\psiw}\}\oplus \mathbb C^P$, which results in equation \eqref{eq:matsumoto}.

Finally, we show that indeed for any set of $\ket{x_i}$ satisfying \eqref{matcond} there exists a proper projective measurement on $\mH\oplus \mathbb C^P$ and a locally unbiased estimator satisfying  \eqref{xdef}, and consequently there exists a corresponding general measurement on $\mH$. To see this, notice that since $\forall_i\braket{\psi_{\boldsymbol{\omega}}|x_i}=0$ and $\forall_{i,j}\braket{x_i|x_j}\in \mathbb{R}$ one may choose an orthonormal basis $\{\ket{b_i}\}$ of $\t{span}\{\ket{\psi_{\boldsymbol{\omega}}},\ket{x_1},\ldots,\ket{x_P}\}$ satisfying:
$\forall_i\braket{\psi_{\boldsymbol{\omega}}|b_i}\in \mathbb{R}\backslash\{0\}$ and $\forall_{i,j}\braket{x_i|b_j}\in \mathbb{R}$. Then one can define a projective measurement:
\begin{equation}
E_\ell=\ket{b_\ell}\bra{b_\ell}\; (\ell=1,\ldots,P+1),\quad
E_0=\openone_{\t{dim}(\mathcal{H}_M)}-\textstyle{\sum}_{\ell=1}^{P+1}\ket{b_\ell}\bra{b_\ell},
\end{equation}
with the corresponding estimator:
\begin{equation}
\tilde\omega_i(\ell)=\frac{\braket{b_\ell|x_i}}{\braket{b_\ell|\psi_{\boldsymbol{\omega}}}}+\omega_i,\;\ell\geq 1,\quad \tilde\omega_i(0)=0,
\end{equation}
which is locally unbiased and satisfies
\begin{equation}
\ket{x_i}=\sum_{\ell=0}^{P+1}(\tilde\omega_i(\ell)-\omega_i)E_\ell\ket{\psi_{\boldsymbol{\omega}}}.
\end{equation}
This proves \eqref{eq:matsumoto}.

Specifically, if $\dim(\mH)\geq 2P+1$ we may choose 
$\vspan\{\ket{\psiw},\partial_1\ket{\psiw},...,\partial_P\ket{\psiw}\}\oplus \mathbb C^P$ as a subspace of $\mH$ and optimize over $\ket{x_i}\in\mH$. In this case we may also reformulate the Matsumoto bound in a slightly different form. First, note that any vectors $\{\ket{x_i}\}$ satisfying \eqref{matcond} need to be linearly independent. Let $\{\ket{c_i}\}_{i=1}^P$ be an orthonormal basis of $\vspan\{\ket{x_1},...,\ket{x_P}\}$, satisfying $\forall_{i,j}\imag \braket{x_i|c_j}=0$ (such a set may be generated using the Gram-Schmidt orthonormalization procedure). The locally unbiased conditions may now be rewritten as:
\begin{equation}
2\real(\braket{x_i|\partial_{j}|\psiw})=
\sum_{k=1}^P2\real(\braket{x_i|c_k}\braket{c_k|\partial_{j}|\psiw})=
\sum_{k=1}^P\braket{x_i|c_k}2\real(\braket{c_k|\partial_{j}|\psiw})=\delta_{ij},
\end{equation}
which (after introducing matrices $\mathcal X_{ki}=\braket{c_k|x_i}$, $\D_{kj}=2\real(\braket{c_k|\partial_{j}|\psiw})$ is equivalent to the matrix equality $\mathcal X^TD=\openone_{\scriptscriptstyle P}$. From $\mathcal X^T \D=\openone_{\scriptscriptstyle P}$ we have $\mathcal X^T=\D^{-1}\Rightarrow \trace(W\cdot V)=\trace(W\cdot \mathcal X^T \mathcal X)=\trace(W\cdot (\D^T\D)^{-1})$, which gives
\begin{equation}
\begin{split}
\label{eq:orthomat}
&\min_{\ket{c_1},...,\ket{c_P}\in\mH}\trace(W\cdot (\D^T\D)^{-1})),\\
&\text{where}~\D_{ij}=2\real\braket{c_i|\partial_j|\psi_\bomega},\quad \text{subject to}~\braket{c_i|c_j}=\delta_{ij}.
\end{split}
\end{equation}
This formulation will be more convenient to use when we will formulate the QEC protocol optimization problem as an SDP.

\subsubsection{Optimizing the error-correction codes}
Now we apply the reformulated Matsumoto bound to our task of identification of the optimal JNT-QEC.
Consider a given input state $\ket{\psi_{\rm in}}$. Let $\mHC$ be any code subspace of $\mH_S\otimes\mH_A$ containing $\ket{\psi_{\rm in}}$ and satisfying the QEC conditions \eqref{eq:QECs}---in order to be in accordance with the reformulated Matsumoto bound, this space may be required to be at least $2P+1$ dimensional,  but as we show in the following it will always be possible to  reduce its dimensionality to $P+1$ effectively. Using QEC, our goal is to preserve an effective unitary evolution in the encoded space and coherently acquire the sensing signal. Therefore, we are effectively dealing with pure state $\ket{\psi_{\bomega}}$, which allows us to utilize \eqref{eq:orthomat}
as a formula for the minimal cost of sensing multiple parameters.

The effective evolution after implementing QEC is given by
\begin{equation}
\ket{\psiw}=\exp\left(-iT\sum_{j=1}^P\omega_j\Pi_{\mHC}\left(G_j\otimes\openone_{\t{dim}(\mathcal{H}_A)}\right)\Pi_{\mHC}\right)\ket{\psi_{\rm in}}.
\end{equation}
 We focus on the estimation around point $\bomega=[0,\ldots,0]$ (which can always be achieved by applying inverse Hamiltonian dynamics~\cite{yuan2016sequential}) and denote $\ket{c_0}=\ket{\psi_{\bomega=0}}$ for notational simplicity. Then for any $\ket{c_i}\in\mHC$ we have $2\real\braket{c_i|\partial_j|\psi_{\bomega=0}}=2T\imag\braket{c_i|(G_j\otimes\openone_{\t{dim}(\mathcal{H}_A)})|c_0}$, and according to \eqref{eq:orthomat} the minimum achievable cost for a fixed code space $\mH_\mC$ is given by:
\begin{equation}
\begin{split}
\label{ourmat}
&\min_{\ket{c_1},...,\ket{c_P}\in
\mHC}
\trace(W\cdot (\D^T\D)^{-1})), \\
&\text{where}~\D_{ij}=2T\imag[\braket{c_i|(G_j\otimes\openone_{\t{dim}(\mathcal{H}_A)})|\psi_\bomega}],\quad\text{subject to}~\braket{c_i|c_j}=\delta_{ij}.
\end{split}
\end{equation}
From the above formulation it is clear that we may always reduce the code space $\mHC$ to $\vspan\{\ket{c_k}\}_{k=0}^{P}$ without increasing the cost. Hence, the problem of optimization over both probes and error-correction protocols is now equivalent to identification of the set $\{\ket{c_k}\}_{k=0}^{P}$ that minimizes the cost with the constraint that $\mHC=\vspan\{\ket{c_k}\}_{k=0}^{P}$ satisfies the QEC conditions.

To solve this problem, it will be convenient to formally extend the Hilbert space $\mH_S \otimes \mH_A$ by tensoring it with a $(P+1)$-dimensional reference space $\mHR = \t{span}\{\ket{0},\dots,\ket{P}\}$ (see \figcref{fig:diagram}).
This reference space will be representing the effective
evolution of the probe state that happens within the code space and it will allow us to
encode QEC conditions in a compact and numerically friendly way.

First, we introduce a matrix $Q \in {\cal L}(\mHR\otimes \mH_S)$ that represents a code
\begin{equation}
\label{eq:qdef}
Q = \trace_{\mH_A}
\begin{pmatrix}
\begin{bmatrix}
\ket{c_0}\\
\vdots\\
\ket{c_P}
\end{bmatrix}
\begin{bmatrix}
\bra{c_0} &
\cdots &
\bra{c_P}
\end{bmatrix}
\end{pmatrix},
\end{equation}
which, more formally, will be written as $Q = \trace_{\mH_A}(\sum_{k,l=0}^{P}\ket{k}\bra{l}_\mHR\otimes\ket{c_k}\bra{c_l}_{\mH_S\otimes\mH_A})$.
This matrix is proportional to the reduced density matrix of the maximum entangled state between $\mHR$ and $\mH_\mC$.
By its construction $Q \geq 0$ and contains all relevant information on the code states in $\mH_\mC$.

Next, we introduce effective generators $G^{\eff}_i$ acting on $\mHR$ so that they represent properly the action of the physical generators on the code space $[G^{\eff}_i]_{kl}=\braket{c_k|G_{i}\otimes \id_{\t{dim}(\mathcal{H}_A)}|c_l}$.
The effective evolution generators are related with the $Q$ matrix via:
\begin{equation}
\label{eq:Ceff}
(G^\eff_i)^T = \trace_{\mH_S} \left[Q (\id_{\scriptscriptstyle P+1} \otimes G_{i})\right]
\quad i=1,\ldots,P.
\end{equation}
Note that the identity operator here acts on the reference space $\mHR$, and \emph{not} on the ancillary space $\mH_A$. Taking into account the orthonormality of $\ket{c_k}$ and the QEC condition \eqref{eq:QECs}, we obtain the following constraints on $Q$
\begin{equation}
\label{eq:CQEC}
\trace_{\mH_S}(Q)=\id_{\scriptscriptstyle P+1}, \;\forall_{S_i\in\mS}~\trace_{\mH_S}\left[Q (\id_{\scriptscriptstyle P+1} \otimes S_i)\right]\propto \id_{\scriptscriptstyle P+1}.
\end{equation}
Let $\id_d/\sqrt{d}$, $\{G_{i}\}_{i=1}^P$, $\{S_{i}\}_{i=P+1}^{P'}$, $\{R_{i}\}_{i=P'+1}^{d^2-1}$ form an orthonormal basis of Hermitian operators in ${\cal L}(\mH_S)$ such that $\mS = \mathrm{span}_{\mathbb{R}}\{\id_d, {(S_{i})}_{i=P+1}^{P'}\}$.
Any non-negative $Q$ satisfying \eqsref{eq:Ceff}{eq:CQEC} has the following form:
\begin{equation}
\label{eq:C}
Q = \id_{\scriptscriptstyle P+1} \otimes \frac{\id_d}{d} + \sum_{i=1}^{P} (G^\eff_i)^T \otimes G_{i}
 + \sum_{i=P+1}^{P'} \nu_i \id_{\scriptscriptstyle P+1} \otimes S_i + \sum_{i=P^\prime +1}^{d^2-1} B_i \otimes R_i\geq 0,
\end{equation}
where $\nu_i\in\mathbb R$ and $B_i$ are Hermitian. Conversely,
for any nonnegative defined $Q\geq 0$, we can consider its purification
$\ket{Q} \in \mHR \otimes \mH_S \otimes \mH_A$, which when written as $\ket{Q} = \sum_{k=0}^P \ket{k}_{\mHR} \otimes \ket{c_k}_{\mH_S \otimes \mH_A}$ yields the code states $\ket{c_k}$. Note that it implies that the rank of $Q$ corresponding to the dimension of the ancillary space. It is always sufficient to assume the dimension of the
ancillary space to be $\dim \mH_A =(P+1)d$. Therefore $\{G^{\eff}_i\}$ is an achievable set of effective generators in ${\cal L}(\mHR)$ (satisfying the QEC condition) if and only if there exist such $\nu_i\in\mathbb R$ and $B_i$, for which $Q \geq 0$.

Finally, in order to have an explicit dependence of the cost on the total time parameter $T$, we introduce a matrix $\A=\frac{1}{2T}\D$, i.e. $\A_{ij}=\imag[\braket{c_i|G_j\otimes\openone_{\t{dim}(\mathcal{H}_A)}|c_0}]=\imag[G^\eff_j]_{i0}$, and we end up with:
\begin{equation}
\begin{split}
\label{almostend}
&\frac{1}{4T^2} \min_{G^\eff_i,B_i,\nu_i}\trace \left(\weight (\A^T\A)^{-1}\right),\quad\text{where~}\A_{ij}=\imag[G^\eff_j]_{i0},\\
&\text{subject to}~ \id_{\scriptscriptstyle P+1} \otimes \frac{\id_d}{d}  +  \sum_{i=1}^{P}  (G^\eff_i)^T  \otimes  G_{i}  +  \sum_{i=P+1}^{P'}  \nu_i \id_{\scriptscriptstyle P+1}  \otimes  S_i +  \sum_{i=P^\prime +1}^{d^2-1}  B_i  \otimes  R_i  \geq  0.
\end{split}
\end{equation}

\subsubsection{Reduction to an SDP}
In order to reformulate \eqref{almostend} as an SDP, we first show that we may assume without loss of generality that $\A\sqrt{\weight^{-1}}\geq 0$. Note that for any full rank matrix $\A$, the polar decomposition theorem implies that there always exists an orthonormal matrix $O$ such that $O\A\sqrt{\weight^{-1}}\geq 0$. Next, as $\A_{ij}=\imag[\braket{i|G^\eff_j|0}]$, multiplication $\A$ by $O$ is equivalent to rotating the base in the reference space $\mHR$. Since, according to \eqref{eq:qdef}
 such a rotation cannot change the non-negativity of $Q$ and at the same time it does not affect the figure of merit $\trace\left(\weight(\A^T\A)^{-1}\right)$, the statement is proven. To put \eqref{almostend} in the form of an SDP, we introduce a positive matrix $K \in \bR^{P\times P}$ and a positive real number $w$. Now, using the following two relations,
\begin{equation}
\begin{bmatrix}
K&\openone_{\scriptscriptstyle P}\\
\openone_{\scriptscriptstyle P}&\A\sqrt{\weight^{-1}}\\
\end{bmatrix}\geq 0~~\Leftrightarrow~~ K\geq (\A\sqrt{\weight^{-1}})^{-1},
\end{equation}
\begin{equation}
\begin{bmatrix}
w\openone_{\scriptscriptstyle P}&K\\
K&\openone_{\scriptscriptstyle P}\\
\end{bmatrix}\geq 0~~\Leftrightarrow~~ w \openone_{\scriptscriptstyle P}\geq K^2,
\end{equation}
we see that $P \min w = \min \trace(K^2) = \min \trace\left(W(\A^T\A)^{-1}\right)$ in \eqref{eq:thm2}, making it equivalent to \eqref{almostend}. Hence the problem takes the form of an SDP.

\subsection{Discussion}
It should be remarked that JNT-QEC do not contain SEP-QEC as a subclass.  In SEP-QEC, unlike in JNT-QEC, the noises are not fully corrected in the entire space, and the decoherence is only avoided by choosing a properly mixed state input. In general one could combine both these approaches in a unified framework by dividing the set of all parameters into smaller subsets and then applying JNT-QEC for each of these subset separately---in this approach the SEP-QEC case would correspond to the situation where JNT-QEC optimization is applied to single parameter subsets. Such an optimization is in principle doable, but will involve much large numerical effort and it is not clear that it will lead to better protocols.

It is also worth noting that, apart from the improved metrological performance provided by QEC protocols when dealing with noisy systems, the above algorithm is also applicable in the noiseless scenario when $\mS = \t{span}_{\mathbb{R}}\{\openone_d\}$. In such a situation \emph{no QEC is required} (for simplicity we may still use $\mHC$ for $\vspan\{\ket{c_k}\}_{k=0}^{P}$, but no recovery operation or projection $\Pi_{\mHC}$ is needed during evolution), but the condition $\bomega=[0,\ldots,0]$ (which is achievable by applying inverse Hamiltonian dynamics~\cite{yuan2016sequential}) is still required,  as otherwise the derivatives of the state may not scale linearly with $T$.
In such situations, the solution of JNT-QEC 
yields an optimally ancilla-assisted sensing protcol under arbitrary system dynamics (Hamiltonitans) that resolves the potential incompatibility issues between sensing of different parameters. 
It should be stressed that our approach is universal and unlike existing approaches \cite{baumgratz2016quantum,yuan2016sequential,Kura2018} does not assume any specific structures of the Hamiltonians. 

\section{Advantages of JNT-QEC over SEP-QEC}
\label{sec:JNTSEP}
In this section we investigate the potential advantages JNT-QEC provide compared with SEP-QEC.
The characteristic feature of the SEP-QEC is the division resources so that each part of the resources is used to measure a given parameter independently of the others. This is  reflected in the form of the input probe state $\rho_{\t{in}}=\frac{1}{P}\sum_{i=1}^P\ket{\psi_i}\bra{\psi_i}$. As a consequence we effectively measure each parameter only once in every $P$ repetitions of an experiment (corresponding to the $1/P$ factor in the $\rho_{\t{in}}$). Therefore for a fixed total number of measurements, the uncertainty of estimating a given parameter will grow proportionally to $P$. Contrastingly, in JNT-QEC  there is a chance to avoid the division of resources and use a single pure state and a single measurement to estimate all of them simultaneously.
If there existed a code space, a state and a measurement which were all simultaneously optimal for all parameters, 
then the decrease the cost by a factor $P$ would be achievable. This would be the largest possible advantage offered by JNT-QEC. 
In general we can write (see \appref{sec:jntsep} for more formal derivation):
\begin{equation}
\label{eq:jntpsep}
\cost_\JNT\geq\tfrac{1}{P}\cost_\SEP.
\end{equation}
This inequality may be saturated only in special examples. 

It is important to note that Theorem 2 gives us a recipe for identification of the optimal JNT-QEC, while the explicit construction of SEP-QEC discussed in the proof of Theorem 1 was only aimed at demonstrating the possibility of the HS and hence the protocol was not optimized. Therefore, to make a fair comparison between the two approaches, we need to compare the performance of the best JNT-QEC with the best SEP-QEC.

Below we present a way to find the optimal  SEP-QEC, i.e. the protocol for which all parameters are measured independently on mutually orthogonal subspaces. We will also provide a useful lower bound for the minimal cost achievable in such protocols.

\subsection{Optimization of SEP-QEC}
Adapting the results from~\cite{zhou2018achieving} (see the section named “Geometrical picture”), we can 
infer that for a given set of $\{G_i\}$, the minimal variance achievable in estimation of a single parameter $\omega_i$ in SEP-QEC is given by:
\begin{equation}
\label{sepopt}
\Delta^2\tilde\omega_i=\frac{1}{F_i},\quad F_i=4T^2\min_{\widetilde G_{i\parallel}\in\mS\oplus\mathrm{span}_{\mathbb{R}}\{ G_j\}_{j\neq i}}\|G_i-\widetilde G_{i\parallel}\|,
\end{equation}
where $\|\cdot\|$ denotes operator norm. Let $\ket{\psi_i}$ be the optimal state for measuring $\omega_i$. Therefore, using $\rho_{\rm in}=\sum^P_{i=1}p_i\ket{\psi_i}\bra{\psi_i}$ (where in the naive approach we would set all $p_i=\frac{1}{P}$)  leads to the total cost $\cost_{\SEP}=\sum_{i=1}^P \frac{1}{p_i}\frac{\weight_{ii}}{F_i}$. After optimization over $p_i$ is performed (keeping in mind that $\sum_{i=1}^Pp_i=1$, $p_i \geq 0$) we get $\cost_{\SEP}=\left(\sum_{i=1}^P\sqrt{\frac{\weight_{ii}}{F_i}}\right)^2$.

Next, we may still improve the peromance of  SEP-QEC by choosing a different QEC code based on a new set of generators obtained via a linear transformation on the parameters $A \in \bR^{P\times P}$:
\begin{equation} 
\bomega' =\bomega A^{-1}, \quad \boldsymbol{G}' =A \boldsymbol{G},\quad \weight' = A\weight A^{T},
 \end{equation}
so that the cost function remains unchanged.
 
 Note also, that rescaling any generator by constant factor has no impact on the results, so we may restrict to linear transformations satisfying
$(AA^T)_{ii}=1$.
Therefore the cost for the optimal SEP-QEC is given by:
\begin{equation}
\begin{split}
\label{optsep}
\cost_{\rm SEP}&=\min_{A: \forall_{i} (AA^T)_{ii}=1}\left(\sum_{i=1}^P\sqrt{\frac{(AWA^T)_{ii}}{F_i(A)}}\right)^2,\\
&{\rm where}\quad F_i(A)=4T^2\min_{\widetilde G_{i\parallel}\in\mS\oplus\mathrm{span}_{\mathbb{R}}\{ (A\boldsymbol G)_j\}_{j\neq i}}\|(A\boldsymbol G)_i-\widetilde G_{i\parallel}\|.
\end{split}
\end{equation}
We want to stress that introducing this (relatively complicated) procedure of SEP-QEC optimization is necessary in order to distniguish the cases where the true advantage is offered by the joint multi-parameter approach compared with the situation where the advatage is only apparent and results from a suboptimal choice of separate protocols.

Formula \eqref{optsep} is rather complicated and hard to compute in general. However, in  case $\weight=\openone$ the following lower bound is valid:
\begin{equation}
\label{sepbound}
\cost_\SEP\geq \min_{a_i:\sum a_i^2=1}\frac{P^2}{T^2(\lambda^*_+-\lambda^*_-)^2},\quad G^*=\sum_{i=1}^Pa_iG_i,
\end{equation}
where $\lambda^*_\pm$ are extreme eigenvalues of $G^*$.


 Since the reasoning leading to \eqref{eq:jntpsep} holds for any  set of generators, not necessary the ones optimal for SEP-QEC, therefore,  the optimal joint-estimation cost, in case $\weight=\openone$,  may be bounded as (see \appref{sec:jntsep} for a formal derivation):
\begin{equation}
\label{jntsepbound}
\cost_{\JNT}\geq \min_{i}\frac{P}{T^2(\lambda_{i+}-\lambda_{i-})^2},\quad {\rm for\,any\,set\,of}\,G_i
\end{equation}
where $\lambda_{i\pm}$ are extreme eigenvalues of $G_i$.

In order to provide the reader with some intuition on the concept presented above, we discuss below two extreme cases 
illustrating the apparent and maximal advantage of joint-estimation protocols over separate ones. These examples deal with noiseless scenarios where JNT-QEC is always optimal and are aimed to prepare the reader for more physical noisy examples
discussed in \secref{sec:examples}. Still, in order not to introduce additional abbreviations we will still use the acronyms JNT-QEC, SEP-QEC to 
describe separate and joint estimation schemes, even though the role of QEC in these examples is trivial.

\subsection{Example: Apparent advantage of JNT-QEC}
Consider a noiseless physical system comprising $P=2^r$ qubits (for technical reasons we require the number of qubits to be the integer power of $2$). Each qubit, regarded as a spin 1/2, senses the $z$-component of a local magnetic field which is assumed to be independent for different qubit locations. The corresponding sensing Hamiltonian for this system reads:
\begin{equation}
H=\sum_{i=1}^P\omega_i G_i,\quad
G_i=\sigma_z^{(i)}=\openone^{\otimes (i-1)}\otimes\sigma_z\otimes \openone^{\otimes (P-i)}
\end{equation}
and the cost matrix is assumed to be $W = \openone$. 
The minimum variance of estimating each parameter independently is lower bounded by $\Delta^2\tilde\omega_i\geq \frac{1}{T^2(\lambda_+-\lambda_-)^2}$. As the maximum and the minimum eigenvalues of each $G_i$ are $\lambda_{i\pm}=\pm 1$, the 
 minimal variance reads $\Delta^2\tilde\omega_i=\frac{1}{4T^2}$.
According to \eqref{jntsepbound}) this implies that the optimal JNT-QEC cost is lower bounded by $\frac{P}{4T^2}$. Moreover, since
 the state $\ket{\psi}=\tfrac{1}{\sqrt{2^P}}\left(\ket{+}+\ket{-}\right)^{\otimes P}$ is simultaneously optimal for all the parameters (and 
 the tensor structure guarantees no measurement incompatibility issue), this bound is saturable and hence
 \begin{equation}
 \cost_{\JNT} = \frac{P}{4T^2}. 
 \end{equation} 
  On the other hand, the estimation strategy where each parameter $\omega_i$ is estimated separately leads to the cost equal $\frac{P^2}{4T^2}$. Therefore, one may naively think that it is an example of superiority JNT-QEC over SEP-QEC.

However, the separate protocol may be significantly improved here, by estimating different combinations of the parameters. 
According to \eqref{sepbound}, we should be looking for a proper linear combination of $G_i$ with the biggest difference of extreme eigenvalues. The most obvious choice is $G^*=\frac{1}{\sqrt{P}}\sum_{i=1}^PG_i$, for which $\lambda^*_{\pm}=\sqrt{P}$, and therefore
 $\cost_\SEP\geq \frac{P}{4T^2}$, which is exactly equal to $\cost_{\JNT}$!  Below we show, that this bound may be saturated.

Let the matrix $A$ defining the transformation to the new set of parameters be 
proportional to a Hadamard matrix of size $P\times P$ (i.e. square matrix whose entries are either $+1$ or $-1$ and whose rows are mutually orthogonal) defined as follows:
\begin{equation}
A_{i j} = \frac{1}{\sqrt{P}}\prod_{k=0}^{r-1} (-1)^{i_k j_k},
\end{equation}
where the indices of the matrix are written using the binary representation, 
$i = \sum_{k=0}^{r-1} i_k 2^k$, where $i_k$ represent the binary digit of $i$ at position $k$. 
%
Then for $\boldsymbol{G}'=A\boldsymbol{G}$ we have:
\begin{equation}
\lambda'_{i\pm}=\pm\sqrt{P},\quad\ket{\lambda'_{i\pm}}=\bigotimes_{j=1}^P\ket{\pm A_{ij}},\quad \forall_{j\neq i}\braket{\lambda'_{i\pm}|G'_j|\lambda'_{i\pm}}=\braket{\lambda'_{i\pm}|G'_j|\lambda'_{i\mp}}=0.
\end{equation}
As a result every $\omega_i'$ may be measured separately on subspace $\vspan\{\ket{\lambda'_{i+}},\ket{\lambda'_{i-}}\}$ with precision $\frac{1}{4PT^2}$. Therefore, for the initial sate $\rho_{\rm in}=\frac{1}{P}\sum_{i=1}^P\ket{\psi_i}\bra{\psi_i}$ with $\ket{\psi_i}=\tfrac{1}{\sqrt{2}}(\ket{\lambda_{i+}}+\ket{\lambda_{i-}})$ we get 
\begin{equation}
\cost_{\SEP}=\frac{P}{4T^2},
 \end{equation}
 which is exactly the same as the optimal $\cost_{\JNT}$. We see that the apparent advantage of a JNT-QEC disappears once a proper  combinations of parameters are estimated in the SEP-QEC.

\subsection{Example: Maximal advantage of JNT-QEC}
\label{sec:exmaxjnt}
As a contrasting example, here we discuss a situation where the advantage of JNT-QEC is genuine and is the maximal possible.
This example also provides an intuition, what relation between the generators $G_i$ is responsible for this. 

Consider a noiseless system $\mH_S=\vspan\{\ket{i}\}_{i=0}^P$ with Hamiltonian:
\begin{equation}
H=\sum_{i=1}^P\omega_iG_i,\quad G_i=\frac{1}{\sqrt{2}}(\ket{0}\bra{i}+\ket{i}\bra{0}).
\end{equation}
We focus on the estimation around point $\bomega=[0,\ldots,0]$ with the cost matrix $\weight=\openone$. As all generators are orthonormal $\trace(G_i G_j)=\delta_{ij}$, the difference between  extreme eigenvalues 
of any normalized combination of them, such as $G^*$ in \eqref{sepbound}, is at most equal to $\sqrt{2}$. Therefore, the optimal SEP-QEC cost is bounded by 
\begin{equation}
\cost_\SEP\geq \frac{P^2}{2T^2}.
\end{equation}

Importantly, the state $\ket{\psi_{\rm in}}=\ket{0}$ is simultaneously optimal for measuring all the parameters.  
There is also no  measurement incompatibility problem as the optimal measurement for all the parameters is the measurement in the basis $\ket{i}$.
Therefore, 
\begin{equation}
\cost_\JNT=\frac{P}{2T^2}
 \end{equation}
is achievable, and hence the there is a factor $P$ decrease in the cost in case of JNT-QEC compared with the optimal SEP-QEC protocols.

\section{Examples}
\label{sec:examples}
Here we provide representative examples of a large class of multi-parameter estimation models, where there are unavoidable tradeoffs 
in determining the optimal states and measurements that arise due to the multi-parameter nature of the problem. Our methods are useful in identifying optimal strategies in all such models, provided the structure of noise admits the achievability of the HS via application of the most general QEC schemes.
By the construction of our algorithm (Theorem 2), we have the guarantee that the solutions found are the optimal ones. For an interested reader, a broader  discussion and generalizations of the results presented in this section, including proofs and analytical constructions of the codes
are provided in \appref{sec:qubit}, \appref{sec:2qubits}, \appref{sec:sud}.

\subsection{Single qubit case}
Consider first the simplest single-qubit case with $d=2$. The HS is achievable via QEC only in the case of single-rank Pauli noise (specified by a single Hermitian Lindbladian $L$)~\cite{sekatski2017quantum}. Without loss of generality we can set $L=\sigma_z$ (the Pauli-Z matrix). Since $\mathcal{S}=\t{span}\{ \openone, \sigma_z\}$, at most two parameters may be estimated in a qubit system with the HS (as $\dim(\mS^{\perp})=2$). However, it turns out that when the multi-parameter HNLS condition is met there is no benefit in performing the more sophisticated JNT-QEC compared to SEP-QEC, which is shown analytically in \appref{sec:qubit}.

\subsection{Two qubits in a magnetic field}
In order to appreciate the superiority of JNT-QEC over SEP-QEC, let us consider a two-qubit model which is a multi-parameter generalization of the one from~\cite{layden2018spatial}. Consider two localized qubits, coupled to a magnetic field, which is constant in both time and space, apart from some small fluctuations in the $z$ direction. These fluctuations are assumed to be uncorrelated in time, but maximally anticorrelated in space (for the two qubits they have always opposite signs). Such a system may be effectively described by \eqref{eq:evol} with $H=\frac{1}{2}\sum_{i=1}^2\bomega \cdot\bsigma^{(i)}$ (where $\bsigma^{(i)}=[\sigma_x^{(i)},\sigma_y^{(i)},\sigma_z^{(i)}]$ acts on the $i^{\text{th}}$ atom) and a single Lindblad operator $L=\sqrt{2\gamma}(\sigma_z^{(1)}-\sigma_z^{(2)})$. It can be shown that the minimal  cost when each parameter is estimated with the optimal individual parameter strategy is $\Delta^2\tilde\omega_{x,y,z}=\frac{1}{4T^2}$. At the same time, in accordance with the discussion presented in Sec.~\ref{sec:JNTSEP}, the best precision achievable using SEP-QEC for the standard cost matrix $W=\openone$, is $\cost_{\t{SEP}} = \frac{P^2}{4T^2} = \frac{9}{4T^2}$---see
\appref{sec:2qubits} for the formal derivation of the above formulas.

Below we present the result of numerical optimization of the JNT-QEC approach (found by the algorithm presented in Theorem~2 and reconstructed to its analytical form). We will use the standard Bell states notation:
\begin{equation}
\begin{split}
\uup=\frac{1}{\sqrt{2}}(\ket{\up\up}+\ket{\down\down}),\quad
\uum=\frac{1}{\sqrt{2}}(\ket{\up\up}-\ket{\down\down}),\\
\udp=\frac{1}{\sqrt{2}}(\ket{\up\down}+\ket{\down\up}),\quad
\udm=\frac{1}{\sqrt{2}}(\ket{\up\down}-\ket{\down\up}).
\end{split}
\end{equation}
Entanglement with ancilla will be abbreviated in the subscript $\ket{\psi}\otimes\ket{i}_A\equiv\ket{\psi}_i$. Using the numerical algorithm we have found out, that the optimal code space has the form
\begin{equation}
\begin{split}
\ket{c_0}&=-\cos(\varphi)\uup_1+\frac{i}{\sqrt{2}}\sin(\varphi)(\uup_2+\uum_3),\\
\ket{c_1}&=-i\sin(\varphi)\uup_1-\cos(\varphi)\udp_2,\\
\ket{c_2}&=-\sin(\varphi)\uum_1-i\cos(\varphi)\udp_3,\\
\ket{c_3}&=-\frac{1}{\sqrt{2}}\sin(\varphi)(\uum_2+\uup_3)+\cos(\varphi)\udp_4,\\
\end{split}
\end{equation}
where the input state is $\ket{\psiwn}=\ket{c_0}$. Note that the presence of the last term in $\ket{c_3}$ (entangled with $\ket{4}_A$) is necessary to satisfy the QEC conditions. The value of $\varphi$ can be found  analytically and the minimal total cost of estimation $\cost$ is achieved for:
\begin{equation}
\cos(\varphi)=\sqrt{\frac{\sqrt{7+4\sqrt{2}}-3}{4\sqrt{2}-2}}\approx 0.39,
\end{equation}
while the corresponding optimal cost is:
\begin{equation}
\cost_\JNT\approx \frac{5.31}{4T^2}.
\end{equation}
As we see a significant improvement has been achieved here compared to SEP-QEC.

\begin{figure}[t]
\center
\includegraphics[width=0.55\columnwidth]{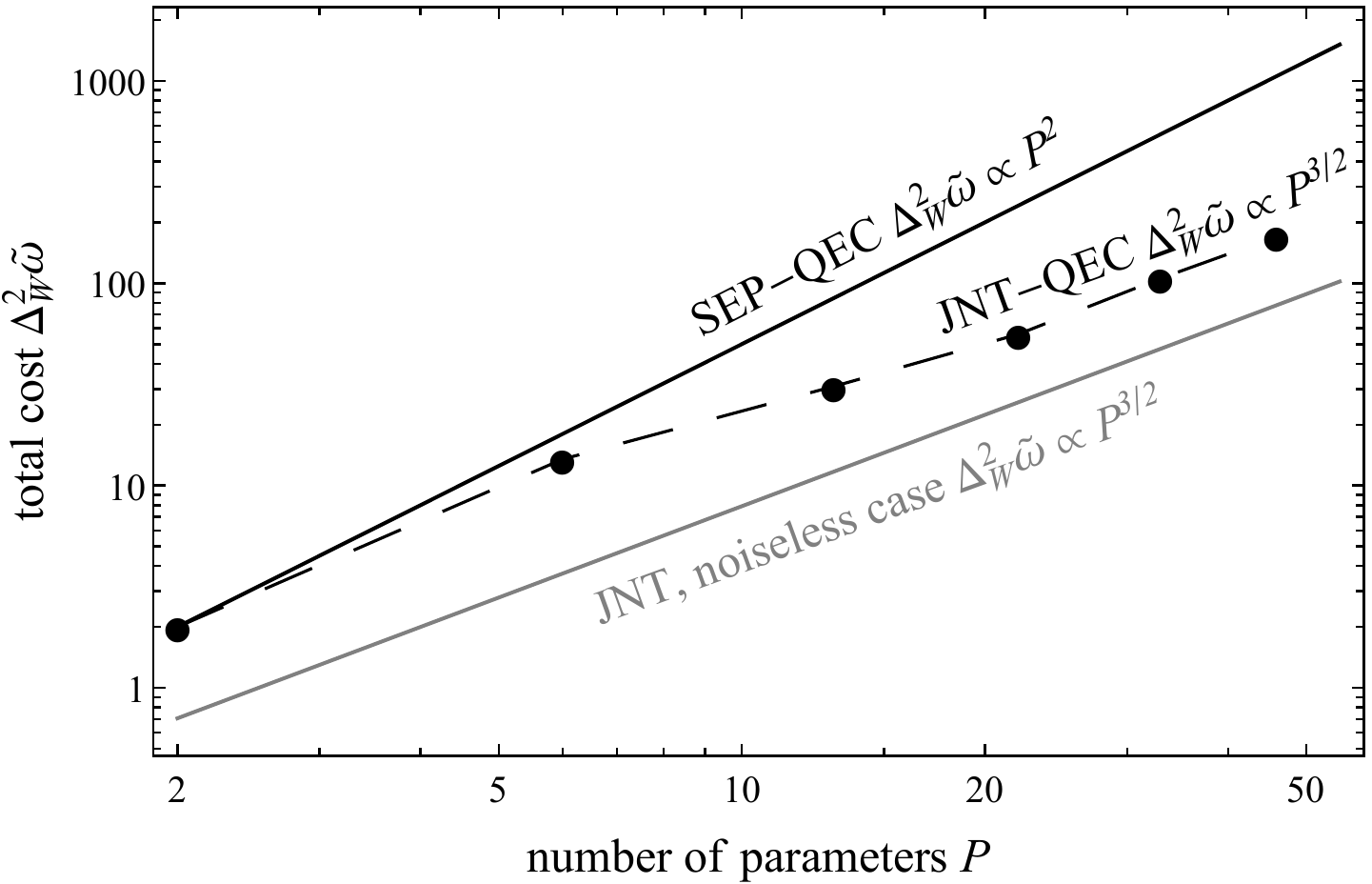}
\caption{
Numerical results for the optimal JNT-QEC strategy for estimating the $SU(d)$ generators under the noise $J_z$ (black points joined by dashed line), in contrast with the optimal precision asymptotically achievable by SEP-QEC $\cost=\frac{P^2}{2T^2}$ (black solid line) and the lower bound $\cost=\frac{P^{3/2}}{4T^2}$ asymptotically achievable only in the noiseless case (solid gray line).}
\label{fig:sud}
\end{figure}

\subsection{\texorpdfstring{$SU(d)$}{SU(d)} generators' estimation}
\label{sec:exsud}
Finally let us consider an example of estimating parameters associated with $SU(d)$ generators  which shows an asymptotic advantage (with the number of parameters) of the JNT-QEC over the SEP-QEC protocol. Let $\mH_S$ be $d$-dimensional Hilbert space, which for a more intuitive notation may be regarded as the one associated with a spin-$j$ particle (where $d=2j+1$).

First, let us recall the noiseless case, where the Hamiltonian $H=\sum_{i=1}^{d^2-1}\omega_i G_i$ is composed of all $P=d^2-1$ $SU(d)$ generators. The generators together with the identity $\{\frac{1}{\sqrt{d}}\openone,G_1,\ldots,G_{d^2-1}\}$ form an orthonormal basis of Hermitian operators on $\mH_S$. 
We focus on the estimation around point $\bomega=[0,\ldots,0]$ with the cost matrix $\weight=\openone$. 
Since all $G_i$ are orthonormal the maximum spread between their maximum and minimal eigenvalues as well as any normalized combination of them is $\sqrt{2}$. Using the same line of reasoning as presented in \secref{sec:exmaxjnt} we conclude that 
\begin{equation}
\cost_\SEP\geq \frac{P^2}{2T^2}=\frac{(d^2-1)^2}{2T^2}.
\end{equation}

The bound for $\cost_\JNT$ may be derived analytically. To achieve that, we use the following chain of inequalities involving the QFI matrix:
\begin{equation}
\label{genin}
\sum_{i=1}^{d^2-1}\Delta^2\tilde\omega_i\geq\sum_{i=1}^{d^2-1}(\fisher^{-1})_{ii}\geq\sum_{i=1}^{d^2-1}\frac{1}{\fisher_{ii}}\geq\frac{(d^2-1)^2}{\sum_{i=1}^{d^2-1}\fisher_{ii}},
\end{equation}
where the first one is the CR inequality and the rest are general algebraic properties of positive semidefinite matrices. What remains to be done is to derive a proper bound for the trace of the QFI matrix.  For any input state $\ket{\psi_{\rm in}}\in\mH_S\otimes\mH_A$ we have:
\begin{equation}
\fisher_{ii}=4T^2\left(\braket{\psiw|G_i^2\otimes\openone|\psiw}-\braket{\psiw|G_i\otimes\openone|\psiw}^2\right)  \leq 4T^2\braket{\psiw| G_i^2\otimes\openone|\psiw}.
\end{equation}
Taking into account the normalization of $G_i$ and noting that $\sum_{i=1}^{d^2-1}G_i^2$ is the Casimir operator of the $SU(d)$ algebra, and 
as such is proportional to the identity, we get that  $\sum_{i=1}^{d^2-1}G_i^2=\frac{d^2-1}{d}\openone$. Therefore:
\begin{equation}
\sum_{i=1}^{d^2-1}\fisher_{ii}\leq 4T^2\braket{\psiw|\sum_{i=1}^{d^2-1}G_i^2\otimes\openone|\psiw}
=4T^2\frac{d^2-1}{d}.
\end{equation}
After substituting the above to \eqref{genin} we get
\begin{equation}
\cost_\JNT = \sum_{i=1}^{d^2-1}\Delta^2\tilde\omega_i\geq\frac{d(d^2-1)}{4T^2}\approx \frac{P^{3/2}}{4T^2}.
\end{equation}

The example of a state which saturates the above bound is $\ket{\psi_{\rm in}}=\frac{1}{\sqrt{d}}\sum_{k=1}^d\ket{k}_S\otimes\ket{k}_A$\cite{Kura2018,yuan2016sequential}. For such a state the QFI matrix is given by $\fisher_{ij}=\delta_{ij}\frac{4T^2}{d}$, so the second and the third inequalities in \eqref{genin} become equalities. As $\imag(\braket{\psiw|\Lambda_i\Lambda_j|\psiw})\propto\braket{\psiw|[G_i\otimes\openone,G_j\otimes\openone]|\psiw}=0$, the first one (the CR bound) is saturable as well. Note that the role of ancillae here is to make optimal measurements with respect to different parameters compatible.
As a result, we see that optimal $\cost_\JNT$ is approximately $2\sqrt{P}$ times smaller than $\cost_\SEP$.

Let us now consider a noisy version with a single Lindblad operator $J_z=\sum_{k=-j}^j k\ket{k}\bra{k}$. From Theorem~1 we know that only parameters associated with generators $G_i\notin\vspan_{\mathbb R}\{\id,J_z,J_z^2\}$ may be estimated with the HS. Therefore we consider the Hamiltonian $H=\sum_{i=1}^{P}\omega_i G_i$ composed of $P=d^2-3$ of $SU(d)$ generators orthogonal to $\vspan_{\mathbb R}\{\id,J_z,J_z^2\}$), with the standard cost matrix $W=\openone$ (such cost makes the problem independent on choosing peculiar set of $\{G_i\}$). In \figref{fig:sud}, we present numerical results for such a problem, and we observe a significant advantage over the SEP-QEC protocol as well as strong indication of the asymptotic $P^{3/2}$ scaling identical to the noiseless case.
Even though the optimal JNT-QEC code cannot be written down analytically in a concise way, in \appref{sec:sud} we provide an analytical suboptimal construction achieving the $P^{3/2}$ scaling, supporting the numerical results.
The scaling advantage we prove here is not trivial because there are no decoherence-free subspaces in the system.


\section{Conclusions and outlook}

We have generalized previous results on single-parameter error-corrected metrology to the multi-parameter scheme, obtaining a necessary and sufficient condition (HNLS) for the achievability of the HS. In case of scenarios when HNLS is satisfied, we developed
 an efficient numerical algorithm (formulated as an SDP) to find the optimal QEC protocol, including the optimal input state, QEC codes and measurements. Our algorithm is applicable to arbitrary system dynamics as long as the HS is achievable (including noiseless cases), 
  which contrasts previous works where special dynamics of quantum system is assumed~\cite{baumgratz2016quantum,yuan2016sequential,Kura2018} or 
  the estimation is performed on fixed quantum states~\cite{albarelli2019evaluating}.
However, it still remains open in which situations the requirement of noiseless ancillae could be removed~\cite{layden2018spatial,layden2018ancilla} and whether QEC is still helpful in the case when HNLS is violated, as in the single-parameter case~\cite{zhou2020optimal,zhou2020theory}.

We also remark that our way of formulating the Knill-Laflamme conditions~\cite{knill1997theory} as a positive semidefinite constraint~\eqref{eq:C} is novel and may have applications beyond error-corrected quantum metrology.

Finally, we note that in this paper we have followed the frequentist estimation approach, and in principle more stringent HS bounds might 
be derived when following the Bayesian approach as was demonstrated recently in the single parameter case \cite{Gorecki2020}. 

\section*{Acknowledgements}
We thank Mengzhen Zhang, John Preskill and Francesco Albarelli for inspiring discussions and Joseph Renes for his input in formulating the final form of the SDP. WG and RDD acknowledge support from the  National Science Center (Poland) grant No. 2016/22/E/ST2/00559.
SZ and LJ acknowledge support from the ARL-CDQI (W911NF-15-2-0067, W911NF-18-2-0237), ARO (W911NF-18-1-0020, W911NF-18-1-0212), ARO MURI (W911NF-16-1-0349), AFOSR MURI (FA9550-14-1-0052, FA9550-15-1-0015), DOE (DE-SC0019406), NSF (EFMA-1640959), and the Packard Foundation (2013-39273).

\onecolumn\newpage
\appendix

\renewcommand{\thesection}{\Alph{section}}
\setcounter{theorem}{0}
\renewcommand{\thetheorem}{A\arabic{theorem}}


\section{The proof of $\cost_\JNT\geq \frac{1}{P}\cost_\SEP$}
\label{sec:jntsep}
In the reasoning presented in the beginning of \secref{sec:JNTSEP} we have assumed that, when estimating a single parameter, the minimal variance of the estimator $\Delta^2\tilde\omega_i$ achievable in SEP-QEC is no bigger than the one achievable in JNT-QEC (even if we focus only on this particular parameter). This statement is not so obvious as both protocols impose different constraints on the code space:
\begin{equation}
\begin{split}
{\textrm{JNT-QEC}}&:\,\forall_{j}\Pi_{\mH_C}(G_j\otimes\openone)\Pi_{\mH_C}\not\propto \Pi_{\mH_C}, \\
{\textrm{SEP-QEC}}&:\,\forall_{j\neq i}\Pi_{\mH_{C_i}}(G_j\otimes\openone)\Pi_{\mH_{C_i}}\propto \Pi_{\mH_{C_i}}, \quad
\Pi_{\mH_{C_i}}(G_i\otimes\openone)\Pi_{\mH_{C_i}}\not\propto \Pi_{\mH_{C_i}}, 
\end{split}
\end{equation}
where none of them is weaker or stronger than the other one.

Therefore, here we show directly that having JNT-QEC with accuracy $\cost_\JNT$, one may always construct SEP-QEC giving $\cost_\SEP=P\cdot \cost_\JNT$. Let $\mH_\mC$ be the code space used in JNT-QEC. Without loss of generality let us assume that $\weight$ is diagonal (otherwise we apply a transformation of parameters that diagonalizes $\weight$) and $\dim(\mH_\mC)\geq 2P+1$ (otherwise we trivially extend it using an additional ancilla). From the Matsumoto bound  (see \eqref{eq:matsumoto}) the optimal cost is given by:
\begin{equation}
\label{aaaa}
\begin{split}
&\cost_\JNT=\min_{\{\ket{x_i}\}}\trace(\weight V),\quad {\textrm{where}}~~V_{ij}=\braket{x_i|x_j},\\
&\text{for}~\ket{x_i} \in \mH_\mC,\\
&\text{subject~to}~2\real (\braket{x_i|\partial_j|\psi_\bomega})=\delta_{ij},\,\braket{x_i|\psiw}=0,\;\imag (V)=0,
\end{split}
\end{equation}
where for $\omega=0$ we have $\partial_j\ket{\psiw}\big|_{\omega=0}=-iT G_j^\mC\ket{\psiwn}$. We define $\mH_{\mC_i}=\vspan_{\mathbb R}\{ \ket{c_0^i}, \ket{c_1^i}\}$ with:
\begin{equation}
\ket{c_0^i}=\tfrac{1}{\sqrt{2}}\left(\ket{\psiwn}\ket{0}_{A_i}+\tfrac{1}{\sqrt{\braket{x_i|x_i}}}\ket{x_i}\ket{1}_{A_i}\right),\quad
\ket{c_1^i}=\tfrac{1}{\sqrt{2}}\left(\tfrac{1}{\sqrt{\braket{x_i|x_i}}}\ket{x_i}\ket{0}_{A_i}+\ket{\psiwn}\ket{1}_{A_i}\right),
\end{equation}
where $\ket{x_i}$ are the result of optimization \eqref{aaaa}. As $\ket{\psiwn},\ket{x_i}\in\mH_\mC$, obviously the noise acts trivially on $\mH_{\mC_i}$. Moreover:
\begin{equation}
\begin{split}
\bra{c_0^i}G_j\otimes\openone\ket{c_1^i}&=\tfrac{1}{2\sqrt{\braket{x_i|x_i}}}\left(\bra{\psiwn}G_j^\mC\ket{x_i}+\bra{x_i}G_j^\mC\ket{\psiwn}\right)=\tfrac{1}{T\sqrt{\braket{x_i|x_i}}}\delta_{ij}\\
\bra{c_0^i}G_j\otimes\openone\ket{c_0^i}&=\tfrac{1}{2}\left(\bra{\psiwn}G_j^\mC\ket{\psiwn}+\tfrac{1}{\braket{x_i|x_i}}\bra{x_i}G_j^\mC\ket{x_i}\right)=\bra{c_1^i}G_j\otimes\openone\ket{c_1^i},
\end{split}
\end{equation}
so also condition $\forall_{j\neq i}\Pi_{\mH_{\mC_i}}G_j\Pi_{\mH_{\mC_i}}\propto \Pi_{\mH_{\mC_i}}$ is satisfied.

Therefore SEP-QEC with initial state $\rho_{\rm in}=\frac{1}{P}\sum_{i=1}^P\ket{c_0^i}\bra{c_0^i}$ and $\mH_{\mC_i}$ defined above leads to the total cost $\cost_\SEP=\sum_{i=1}^P P\cdot \weight_{ii}\braket{x_i|x_i}=P\cdot \cost_\JNT$. From this reasoning we see clearly that the largest possible advantage of JNT-QEC over SEP-QEC is to decreasing the total cost by a factor $P$.

\section{Optimality of SEP-QEC in the single-qubit model}
\label{sec:qubit}
Let us consider the most general case of a two-parameter quantum estimation problem under Markovian noise in a $2$-dimensional Hilbert space when the HS achievability condition is satisfied. Without loss of generality we assume that the Lindblad operator is $L=\sigma_z$,  the Hamiltonian is $H=\omega_x\sigma_x+\omega_y\sigma_y$ and the cost matrix is diagonal:
\begin{equation}
\begin{bmatrix}
\weight_{xx} & 0\\
0 & \weight_{yy}
\end{bmatrix}
\end{equation}
(otherwise, one can always apply a proper transformation in the parameter space $\bomega'=\bomega A^{-1}$ which diagonalizes the cost matrix, without changing orthonormality of the generators). Below we show that for such a problem there is no advantage of JNT-QEC over SEP-QEC.

First let us consider the optimal SEP-QEC. Each generator has eigenvalues $\ket{\lambda_{x/y\pm}}=\pm1$ and each parameter may be estimated with precision  $\Delta^2\tilde\omega_{x/y}=\frac{1}{4T^2}$~\cite{sekatski2017quantum}. Moreover, as the cost is diagonal, there is no point in applying an additional transformation in the optimization procedure given in \eqref{optsep}---indeed, any matrix $A$ satisfying $\forall_i (AA^T)_{ii}=1$ preserves the eigenvalues of the generators as well as $\trace(A\weight A^T)$. Therefore the optimal SEP-QEC cost is simply given by $\cost_\SEP=\frac{(\sqrt{\weight_{xx}}+\sqrt{\weight_{yy}})^2}{4T^2}$.
We will show below that this  cannot be outperformed by the optimal JNT-QECs.

For the sake of notation simplicity, when tensoring an operator acting on $\mH_S$ with the identity on $\mH_A$, the part $\otimes\openone$ will be omitted and we will denote $\sigma_i\otimes\openone$ simply as $\sigma_i$ (unless this leads to ambiguity).

First, we note that the diagonal elements of the QFI matrix for the state $\ket{\psiw}$ are
\begin{equation}
\begin{split}
\label{spf}
\fisher_{ii}&=4T^2(\braket{\psiw|\sigma_i\Pi_{\mH_\mC}\sigma_i|\psiw}-\abs{\braket{\psiw|\Pi_{\mH_\mC}\sigma_i|\psiw}}^2)\\ &\leq 4T^2\braket{\psiw|\sigma_i\Pi_{\mH_\mC}\sigma_i|\psiw}\quad(i=x,y).
\end{split}
\end{equation}
Moreover
\begin{equation}
\begin{split}
\label{fineq}
\cost&=\weight_{xx}\Delta^2\tilde\omega_x+\weight_{yy}\Delta^2\tilde\omega_y\geq \frac{\weight_{xx}}{\fisher_{xx}}+\frac{\weight_{yy}}{\fisher_{yy}}=\frac{1}{\fisher_{xx}+\fisher_{yy}}\left(\frac{\weight_{xx}}{\frac{\fisher_{xx}}{\fisher_{xx}+\fisher_{yy}}}+\frac{\weight_{yy}}{\frac{\fisher_{yy}}{\fisher_{xx}+\fisher_{yy}}}\right)\\
&\geq\frac{1}{\fisher_{xx}+\fisher_{yy}}\min_{p\in[0,1]}\left(\frac{\weight_{xx}}{p}+\frac{\weight_{yy}}{1-p}\right)=\frac{(\sqrt{\weight_{xx}}+\sqrt{\weight_{yy}})^2}{\fisher_{xx}+\fisher_{yy}}.
\end{split}
\end{equation}
Therefore we may focus on an upper bound for $\sum_{i=x,y}\braket{\psiw|\sigma_i\Pi_{\mH_\mC}\sigma_i|\psiw}$.

Let $\{\ket{c_0},\ket{c_1},\ket{c_2}\}$ be an orthonormal basis of $\mH_\mC\subseteq\mH_S\otimes\mH_A$. These vectors can be written down as
\begin{equation}
\ket{c_i}=\cos(\varphi^i)\ket{0}\ket{A^i_0}+\sin(\varphi^i)\ket{1}\ket{A^i_1},
\end{equation}
where $\ket{A^i_{0/1}}$ are normalized states in $\mH_A$ and $\varphi^i\in[0,\frac{\pi}{2}]$ (a potential complex phase is incorporated in the definition of$\ket{A^i_{0/1}}$). The QEC condition requires $\forall_{i,j}\braket{c_i|\sigma_z|c_j}=\lambda\delta_{ij}$, which leads to the following two constraints: (i) $\forall_i\cos^2(\varphi^i)-\sin^2(\varphi^i)=\lambda$ means all $\varphi^i$ are equal (therefore superscript $i$ will be omitted); (ii) $\forall_{i\neq j}\cos^2(\varphi)\braket{A^i_0|A^j_0}-\sin^2(\varphi)\braket{A^i_1|A^j_1}=0$. Together with the orthonormality condition $\forall_{i\neq j}\cos^2(\varphi)\braket{A^i_0|A^j_0}+\sin^2(\varphi)\braket{A^i_1|A^j_1}=0$, we have
\begin{equation}
\begin{split}
\label{eq:Aortho}
&\ket{c_i}=\cos(\varphi)\ket{0}\ket{A^i_0}+\sin(\varphi)\ket{1}\ket{A^i_1},\\
&\quad\forall_{i,j}\braket{A_0^i|A_0^j}=\delta_{ij},\quad \braket{A_1^i|A_1^j}=\delta_{ij}.
\end{split}
\end{equation}
Note that there is no fixed relationship between sets $\{\ket{A_0^i}\}_{i=0}^{i=2}$ and $\{\ket{A_1^i}\}_{i=0}^{i=2}$---in particular it may happen that  $\vspan\{\ket{A_0^i}\}\neq\vspan\{\ket{A_1^i}\}$. Effective generators in the chosen basis are given as:
\begin{align}
(G^\eff_x)_{ji} & =  \braket{c_j|\sigma_x|c_i} = \frac{\sin(2\varphi)}{2}(\braket{A_0^j|A_1^i}+\braket{A_1^j|A_0^i}),\\
(G^\eff_y)_{ji} & =  \braket{c_j|\sigma_y|c_i} = i\frac{\sin(2\varphi)}{2}(\braket{A_1^j|A_0^i}-\braket{A_0^j|A_1^i}).
\end{align}
We focus on estimation around point $\bomega=[0,...,0]$ for which $\ket{\psiw}=\ket{\psi_{\rm in}}=\ket{c_0}$.
Then
\begin{multline}
\braket{\psiw|\sigma_{{x}\slash{y}}\Pi_{\mH_\mC}\sigma_{{x}\slash{y}}|\psiw}=\sum_{i=0}^2\braket{c_0|\sigma_{{x}\slash{y}}|c_i}\braket{c_i|\sigma_{{x}\slash{y}}|c_0}=\\
\frac{\sin^2(2\varphi)}{4}
\big(\sum_{i=0}^2 \abs{\braket{A_0^0|A_1^i}}^2+\abs{\braket{A_1^0|A_0^i}}^2\pm2\real(\braket{A_1^0|A_0^i}\braket{A_1^i|A_0^0})\big).
\end{multline}
Since for each $k=0/1$, states $\{\ket{A_{k}^i}\}_{i=0}^2$ are mutually orthonormal,
\begin{equation}
\sum_{i=x,y}\braket{\psi|\sigma_i\Pi_{\mH_\mC}\sigma_i|\psi} =
\frac{\sin^2(2\varphi)}{2} \cdot \sum_{i=0}^2(\abs{\braket{A_0^0|A_1^i}}^2+\abs{\braket{A_1^0|A_0^i}}^2)\leq \sin^2(2\varphi) \leq 1,
\end{equation}
where the first inequality is saturated if and only if both $\ket{A^0_0}\in\vspan\{\ket{A^i_1}\}_{i=0}^2$ and $\ket{A^0_1}\in\vspan\{\ket{A^i_0}\}_{i=0}^2$. Using \eqref{spf} and \eqref{fineq} we get $\cost \geq \frac{(\sqrt{\weight_{xx}}+\sqrt{\weight_{yy}})^2}{4T^2}$. This implies that the JNT-QEC stategy cannot outperform the best SEP-QEC strategy.

\section{The optimal SEP protocol for sensing all magnetic field components in presence of correlated dephasing noise in the two-qubit model}
\label{sec:2qubits}
Here we prove formally, that the optimal precision achievable in the second example from the main text is $\cost_{\rm SEP}=\frac{9}{4T^2}$.

Let us briefly review the problem. We consider a two-atom system with Hamiltonian $H=\frac{1}{2}\sum_{i=1}^2\bomega \cdot\bsigma^{(i)}$ (where $\bsigma^{(i)}=[\sigma_x^{(i)},\sigma_y^{(i)},\sigma_z^{(i)}]$ acts on the $i^{\text{th}}$ atom) and a single Lindblad operator $L=\sqrt{2\gamma}(\sigma_z^{(1)}-\sigma_z^{(2)})$, therefore
\begin{equation}
\label{tqs}
\mS=\vspan_{\mathbb R}\{\openone,L,L^2\}=\vspan_{\mathbb R}\{\openone,\sigma_z^{(1)}-\sigma_z^{(2)},\sigma_z^{(1)}\sigma_z^{(2)}\}.
\end{equation}
As for all three generators the minimal and the maximal eigenvalues are respectively $-1$ and $+1$, we immediately see that $\Delta^2\tilde\omega_{x,y,z}\geq \frac{1}{4T^2}$ and from that $\cost_{\rm SEP}\geq\frac{P^2}{4T^2}=\frac{9}{4T^2}$ (which cannot be improved by applying transformation $A$). Below we show that such a precision is indeed achievable.

First, $\omega_z$ can be estimated using a decoherence-free subspace~\cite{lidar1998decoherence} $\vspan\{\ket{\psi_z},\frac{1}{2}(\sigma_z^{(1)}+\sigma_z^{(2)})\ket{\psi_z}\}$, where $\ket{\psi_z}=\frac{1}{\sqrt{2}}(\ket{\up\up}+\ket{\down\down})$, which leads to the precision $\Delta^2\omega_{z}=\frac{1}{4T^2}$.

In case of $\omega_x$ the situation is slightly more complicated, as using the analogue approach the subspace $\vspan\{\ket{\psi_x},\frac{1}{2}(\sigma_x^{(1)}+\sigma_x^{(2)})\ket{\psi_x}\}$ would not satisfy the QEC conditions. To get the desired estimation precision we need to find the state $\ket{\psi_x}$ which is optimal for measuring $\omega_x$ (from the point of view of noiseless, single-parameter estimation) and for which
\begin{equation}
\label{csx}
\mH_{\mC_x}=\vspan\{\ket{\psi_x},\frac{1}{2}(\sigma_x^{(1)}+\sigma_x^{(2)})\ket{\psi_x}\}
\end{equation}
satisfies the QEC condition:
\begin{equation}
\begin{split}
\label{eqec}
\Pi_{\mH_{\mC_x}}(\sigma_z^{(1)}-\sigma_z^{(2)})\Pi_{\mH_{\mC_x}}\propto\Pi_{\mH_{\mC_x}},\\
\Pi_{\mH_{\mC_x}}\sigma_z^{(1)}\sigma_z^{(2)}\Pi_{\mH_{\mC_x}}\propto\Pi_{\mH_{\mC_x}},
\end{split}
\end{equation}
and, moreover, other generators act trivially inside $\mH_{\mC_x}$:
\begin{equation}
\begin{split}
\label{esep}
\Pi_{\mH_{\mC_y}}(\sigma_y^{(1)}+\sigma_y^{(2)})\Pi_{\mH_{\mC_x}}\propto\Pi_{\mH_{\mC_x}},\\
\Pi_{\mH_{\mC_z}}(\sigma_z^{(1)}+\sigma_z^{(2)})\Pi_{\mH_{\mC_x}}\propto\Pi_{\mH_{\mC_x}}.
\end{split}
\end{equation}
It is known that for single-parameter frequency estimation the optimal state corresponds to an equally weighted superposition of states with minimal and maximal eigenvalues of the generator. For $\frac{1}{2}(\sigma_x^{(1)}+\sigma_x^{(2)})$ it will be:
\begin{equation}
\ket{\psi_x^{\varphi}}=\frac{1}{2\sqrt{2}}\cdot\left((\ket{\up}+\ket{\down})(\ket{\up}+\ket{\down})+e^{i\varphi}(\ket{\up}-\ket{\down})(\ket{\up}-\ket{\down})\right)
\end{equation}
for any $\varphi\in\mathbb R$. Note, that any superposition of $\ket{\psi_x^{\varphi}}$ (with different $\varphi$) entangled with separated ancillae is still optimal for sensing $\omega_x$. Therefore, we can take $\ket{\psi_x}=\frac{1}{\sqrt{2}}(\ket{\psi_x^0}\ket{1}_A+\ket{\psi_x^{\pi}}\ket{2}_A)$. We have:
\begin{equation}
\ket{\psi_x}=\frac{1}{2}\big((\ket{\up\up}+\ket{\down\down})\ket{1}_A+(\ket{\up\down}+\ket{\down\up})\ket{2}_A\big),
\end{equation}
\begin{equation}
\frac{1}{2}(\sigma_x^{(1)}+\sigma_x^{(2)})\ket{\psi_x}=\frac{1}{2}\big((\ket{\up\down}+\ket{\down\up})\ket{1}_A+(\ket{\up\up}+\ket{\down\down})\ket{2}_A\big).
\end{equation}
and then the code space \eqref{csx} satisfies \eqref{eqec} and \eqref{esep}. It gives as $\Delta^2\omega_{x }=\frac{1}{4T^2}$. Analogous reasoning could be provided for $\omega_y$.

Therefore for $\rho_{\rm in} = \frac{1}{3}\sum_{i=x,y,z}\ket{\psi_i}\bra{\psi_i}$ we have $\cost=\frac{9}{4T^2}$  in line with general considerations on the performance of the SEP-QEC codes as given in Sec.~\ref{sec:JNTSEP}.

\section{Estimating the $SU(d)$ generators}
\label{sec:sud}


Below we present an example of a JNT-QEC protocol allowing one to achieve the total cost $\cost = \Theta(P^{\frac{3}{2}})$ for the last example in the main text. For clarification, we treat $d$-dimensional Hilbert space as a single spin-$j$ particle ($d=2j+1$) and we use the notation where $\{\ket{k}\}_{k=-j}^j$ is the  eigenbasis of the $J_z$ operator.

We consider a problem where the noise generator $J_z$ and the unitary evolution $H$ read:
\begin{equation}
J_z=\sum_{k=-j}^j k\ket{k}\bra{k},\qquad H=\sum_{i=1}^P\omega_{i}G_{i},
\end{equation}
where $G_i$ is an orthonormal basis of $\mS^\perp$---the orthogonal complement of $\mS = \vspan\{\id,J_z,J_z^2\}$ (therefore $P=d^2-3$). For technical reasons we distinguish three groups of operators that form the basis $\{G_i\}$:
\begin{itemize}
\item Real off-diagonal: $G^R_{kl}=\frac{1}{\sqrt{2}}(\ket{k}\bra{l}+\ket{l}\bra{k})$
\item Imaginary off-diagonal: $G^I_{kl}=\frac{i}{\sqrt{2}}(\ket{k}\bra{l}-\ket{l}\bra{k})$
\item Diagonal: $G^D_{i}=\sum_{k=-j}^jg_i^k\ket{k}\bra{k}$
\end{itemize}
and in what follows we prove the scaling $\Theta(P^{\frac{3}{2}})$ for each group. For simplicity, we assume, that $j$ is an integer (for half-integer $j$ the proof remains almost the same) and in this section we focus on the estimation around point $\bomega=[0,\ldots,0]$ and set $T=1$.

\noindent\textbf{Real off-diagonal generators.} We take $\dim(\mH_A)=\dim(\mH_S)$ and the state $\ket{\psi_\bomega}=\ket{\psi^R}=\frac{1}{\sqrt{2j+1}}\sum_{k=-j}^j\ket{k}\ket{k}_A\in\mH_S\otimes\mH_{A}$, we have
\begin{equation}
\braket{\psi^R|J_z|\psi^R}=0,\quad \braket{\psi^R|J_z^2|\psi^R}
=\frac{j(j+1)}{3}.
\end{equation}
We construct the code space in the following way. First, we act on $\ket{\psi}$ with generators: $G_{kl}^R\ket{\psi}=\frac{1}{\sqrt{2}}(\ket{k}\ket{l}_A+\ket{l}\ket{k}_A)$ and then we ``fix it'' to satisfy the QEC condition by extending ancilla $\mH_A\rightarrow\mH_A\oplus\mH_B$ and adding more terms:
\begin{equation}
\ket{c^R_{kl}}=\frac{p}{\sqrt{2}}(\ket{k}\ket{l}_A+\ket{l}\ket{k}_A)+
+q\ket{j}\ket{klj}_B+r\ket{-j}\ket{kl(-j)}_B+s\ket{0}\ket{kl0}_B,
\end{equation}
where $\braket{klm|k'l'm'}_B=\delta_{(klm)(k'l'm')}$. Then the QEC condition is equivalent to:
\begin{equation}
\label{eq:con}
\begin{split}
&\braket{c^R_{kl}|c^R_{kl}}=p^2+q^2+r^2+s^2=1,\\
&\braket{c^R_{kl}|J_z|c^R_{kl}}=\frac{p^2}{2}(k+l)+(q^2-r^2)j=0,\\
&\braket{c^R_{kl}|J_z^2|c^R_{kl}}=\frac{p^2}{2}(k^2+l^2)+(q^2+r^2)j^2=\frac{j(j+1)}{3}.\\
\end{split}
\end{equation}
The off-diagonal terms are automatically zero, no matter what $p,q,r,s$ are. We can write down $q^2$, $r^2$ and $s^2$ as linear functions of $p^2$:
\begin{equation}
\begin{split}
q^2&=\frac{1}{2j^2}\Big(\frac{j(j+1)}{3}-\frac{p^2}{2}(k^2+l^2+j(k+l))\Big),\\
r^2&=\frac{1}{2j^2}\Big(\frac{j(j+1)}{3}-\frac{p^2}{2}(k^2+l^2-j(k+l))\Big),\\
s^2&=1-p^2-\frac{1}{j^2}\Big(\frac{j(j+1)}{3}-\frac{p^2}{2}(k^2+l^2)\Big).
\end{split}
\end{equation}
Note that $p$ is a valid coefficient if the above set of equations has a solution (i.e. if the right-hand sides are positive). As $-2j\leq k+l\leq 2j$, $k^2+l^2\leq 2j^2$, this always holds provided $p^2=\frac{1}{6}$. For the code space ${\mH_{\mC_R}}$ spanned by vectors constructed in such a way, we have
\begin{equation}
(G_{kl}^{R})^{{\mH_{\mC_R}}}\ket{\psi^R}=\frac{p}{\sqrt{2j+1}}\ket{c^R_{kl}}.
\end{equation}
The QFIs are
\begin{equation}
\fisher^R_{(kl)(k'l')}=4\real(\braket{\psi^R|(G_{kl}^{R})^{{\mH_{\mC_R}}}(G_{k'l'}^{R})^{{\mH_{\mC_R}}}|\psi^R}-\braket{\psi^R|(G_{kl}^{R})^{{\mH_{\mC_R}}}|\psi^R}\braket{\psi^R|(G_{k'l'}^{R})^{{\mH_{\mC_R}}}|\psi^R})
\end{equation}
 which in our case simplifies to:
\begin{equation}
\fisher^R_{(kl)(k'l')}=\delta_{(kl)(k'l')}4\braket{\psi^R|((G_{kl}^{R})^{{\mH_{\mC_R}}})^2|\psi^R}=\frac{4p^2}{2j+1}.
\end{equation}
As $\braket{\psi^R|[(G_{kl}^{R})^{{\mH_{\mC_R}}},(G_{k'l'}^{R})^{{\mH_{\mC_R}}}]|\psi^R}=0$, the CR bound is saturable and the total cost is
\begin{equation}
\sum_{k>l}\Delta^2\omega^R_{kl}=\frac{2j+1}{4}\sum_{k>l}\frac{1}{p^2}=\frac{3j(2j+1)^2}{2}=\Theta(P^{\frac{3}{2}}).
\end{equation}

\noindent\textbf{Imaginary off-diagonal generators.} The reasoning is analogous to the previous case. Note that using different ancillary spaces for real and imaginary generators is needed. Even though $\braket{\psi^R|G^R_{kl}J_zG^R_{k'l'}|\psi^R}\propto \delta_{kl,k'l'}$ and $\braket{\psi^R|G^I_{kl}J_zG^I_{k'l'}|\psi^R}\propto \delta_{kl,k'l'}$ are satisfied automatically,\\ $\braket{\psi^R|G^R_{kl}J_zG^I_{k'l'}|\psi^R}\propto \delta_{kl,k'l'}$ may not be true. 

\noindent\textbf{Diagonal generators.} The number of diagonal generators scales like $\Theta(j)$ (whereas for off-diagonal elements the scaling is $\Theta(j^2)$), implying that the estimation with respect to diagonal generators does not contribute significantly to the overall scaling. Therefore we could simply use the SEP-QEC approach. Following \cite{zhou2018achieving}, any traceless generator may by written down as:
\begin{equation}
G^D_i=\frac{1}{2}\trace(|G^D_i|)(\rho_{i+}-\rho_{i-}).
\end{equation}
We define states $\ket{c_{i+}},\ket{c_{i-}}$ as purifications of these density matrices by using mutually orthogonal ancillary subspaces $\mH_{Ai+},\mH_{Ai-}$:
\begin{equation}
\rho_{i+/-}=\trace_{Ai+/-}(\ket{c_{i+/-}}\bra{c_{i+/-}}).
\end{equation}
Therefore
\begin{equation}
\begin{split}
\bra{c_{i+}}G^D_i\ket{c_{i+}}&=\frac{1}{2}\trace(|G^D_i|)\geq\frac{1}{2\sqrt{2j+1}},\\
\bra{c_{i-}}G^D_i\ket{c_{i-}}&=-\frac{1}{2}\trace(|G^D_i|)\leq-\frac{1}{2\sqrt{2j+1}},
\end{split}
\end{equation}
and from that for code space $\vspan\{\ket{c_{i+}},\ket{c_{i-}}\}$ and input state $\frac{1}{\sqrt{2}}(\ket{c_{i+}}+\ket{c_{i-}})$ we have $F_{\omega_i}\geq \frac{1}{2j+1}$. For a single-parameter problem the CR bound is always saturable
so using the SEP-QEC approach for the input state
\begin{equation}
\rho_{\rm in}^D=\frac{1}{2j-2}\sum_{i=1}^{2j-2}\frac{1}{2}\left(\ket{c_{i+}}+\ket{c_{i-}}\right)\left(\bra{c_{i+}}+\bra{c_{i-}}\right)
\end{equation}
we have
\begin{equation}
\sum_{i=1}^{2j-2}\Delta^2\omega^D_i=(2j-2)\sum_{i=1}^{2j-2}\frac{1}{F_{\omega_i}}\leq (2j+1)(2j-2)^2 = \Theta(P^{\frac{3}{2}}).
\end{equation}

\noindent\textbf{Results.}
Finally, combining all above, for the input state:
\begin{equation}
\rho_{\rm in}=\frac{1}{3}(\ket{\psi^R}\bra{\psi^R}+\ket{\psi^I}\bra{\psi^I}+\rho^D)
\end{equation}
(with properly applied QEC protocol) we get
\begin{equation}
\sum_{i=1}^{P}\Delta^2\omega_i=\Theta(P^{\frac{3}{2}}).
\end{equation}

\bibliographystyle{aps}

\end{document}